\newtheorem{thm}{Theorem}
\newtheorem{defn}{Definition}
\newtheorem{lemma}{Lemma}
\newtheorem{pro}{Proposition}
\newtheorem{rk}{Remark}
\numberwithin{equation}{section} \setcounter{tocdepth}{1}
\def\l{\lambda}
\begin{document}
\title[Gibbs measures for hardcore-SOS models on trees]{
Gibbs measures for hardcore-SOS models on Cayley trees}

\author{Benedikt Jahnel \& Utkir Rozikov}

\address{B.\ Jahnel\\ Institut f\"ur Mathematische Stochastik,
	 Technische Universit\"at Braunschweig, Universit\"atsplatz 2, 38106, Braunschweig, Germany \&
	Weierstrass Institute for Applied Analysis and Stochastics, Mohrenstrasse 39, 10117, Berlin, Germany.}
\email {benedikt.jahnel@tu-braunschweig.de}

	\address{U.\ Rozikov$^{1,2,3}$\\ \begin{itemize}
\item[]	$^1$V.I.~Romanovskiy Institute of Mathematics, 9, University str.,  Tashkent, 100174, Uzbekistan.
\item[] $^2$New Uzbekistan University, 1, Movarounnahr str. Tashkent, 100000, Uzbekistan.
\item[] $^3$National University of Uzbekistan,  4, Universitet str.,  Tashkent, 100174, Uzbekistan.
\end{itemize} 
 }
\email{rozikovu@yandex.ru}

\begin{abstract}
We investigate the finite-state $p$-solid-on-solid model, for $p=\infty$, on Cayley trees of order $k\geq 2$ and establish a system of functional equations where each solution corresponds to a (splitting) Gibbs measure of the model. Our main result is that, for three states, $k=2,3$ and increasing coupling strength, the number of translation-invariant Gibbs measures behaves as $1\to3\to5\to6\to7$. This phase diagram is qualitatively similar to the one observed for three-state $p$-SOS models with $p>0$ and, in the case of $k=2$, we demonstrate that, on the level of the functional equations, the transition $p\to\infty$ is continuous.
\end{abstract}
\maketitle

{\bf Mathematics Subject Classifications (2022).} 82B26 (primary);
60K35 (secondary)

{\bf{Keywords.}} Cayley tree, $p$-SOS model, 
Gibbs measure,  tree-indexed Markov chain.

\section{Setting, models and functional equations}
Statistical-mechanics models on trees are known to possess rich structural properties, see for example~\cite{Ro,Ge} for general references. In the present note we contribute to this field by analyzing a hardcore model that emerges as a limit of the $p$-SOS model that was recently introduced in~\cite{CKL}. The setting for this {\em $\infty$-SOS model} is as follows. 

\medskip
Let $\Gamma^k= (V , L)$ be the uniform {\em Cayley tree} where each vertex has $k + 1$ neighbors with $V$ being the set of vertices and $L$ the set of edges. Endpoints $x, y$ of an edge $\ell=\langle x, y\rangle$ are called {\em nearest neighbors}. On the Cayley tree there is a natural distance, to be denoted $d(x,y)$, being the smallest number of nearest-neighbors pairs in a path between the vertices $x$ and $y$, where a {\em path} is a sequence of nearest-neighbor pairs of vertices where two consecutive pairs share at least one vertex. 
For a fixed $x^0\in V$, the {\em root},
we let
$$ V_n=\{x\in V\colon d(x,x^0)\leq n\}\quad \text{ and }\quad W_n=\{x\in V\colon d(x^0,x)=n\}$$
denote the {\em ball} of radius $n$, respectively the {\em sphere} of radius $n$, both with center at $x^0$.
Further, let $S(x)$ be the {\em direct successors} of $x$, i.e., for $x\in W_n$
$$S(x)=\{y\in W_{n+1}\colon  d(x,y)=1\}.$$
Next, we denote by $\Phi=\{0,1, \dots, m\}$ the {\em local state space}, i.e., the space of values of the spins associated to each vertex of the tree. Then, a {\em configuration} on the Cayley tree is a collection $\sigma = \{\sigma(x)\colon x\in V\} \in \Phi^V=\Omega$.

\medskip
Let us now describe hardcore interactions between spins of neighboring vertices. For this, let $G=(\Phi,K)$ be a graph with vertex set $\Phi$, the set of spin values, and edge set $K$. A configuration $\sigma$ is called $G$-\textit{admissible} on a
Cayley tree if $\{\sigma (x),\sigma (y)\}\in K$ is an edge of $G$ for any pair of nearest neighbors $\langle x, y \rangle\in L$. We let $\Omega^G$
denote the sets of $G$-admissible
configurations.
The restriction of a configuration on a subset $A$ of $V$ is denoted by $\sigma_A$ and $\Omega_A^G$ denotes the set of all $G$-admissible configurations on $A$.
On a general level, we further define the {\em matrix of activity} on edges of $G$ as a function 
$${\lambda}\colon \{i,j\}\in K
\to \lambda_{i,j}\in \mathbb R_+,$$
where $\mathbb R_+$ denotes the positive real numbers and 
$\l_{i,j}$ is called the {\em activity} of the edge $\{i,j\}\in K$.
In this note, we consider the graph $G$ as shown in Figure~\ref{graph}, which is called a {\em hinge-graph}, see for example~\cite{Br1}.
\begin{figure}[h]
	\begin{center}
\includegraphics[width=9cm]{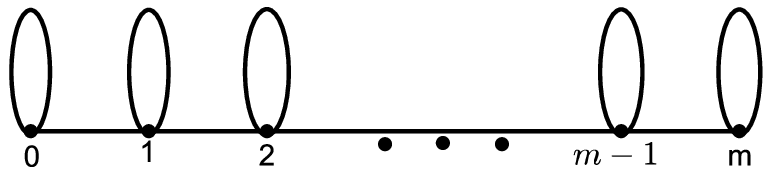} 
	\end{center}
	\caption{The hinge-graph $G$ with $m+1$ vertices.}\label{graph}
\end{figure}
In words, in the hinge-graph $G$, configuration are admissible only if, for any pair of nearest-neighbor vertices $\{ x,y\}$, we have that
\begin{equation}\label{adm}
|\sigma(x)-\sigma(y)|\in \{0,1\}.
\end{equation}
Let us also note that our choice of  admissibilities generalizes certain finite-state random homomorphism models, see \cite{LT,JKR}, where only configurations with $|\sigma(x)-\sigma(y)|=1$ 
are allowed. 

\medskip
Our main interest lies in the analysis of the set of {\em splitting Gibbs measures} (SGMs) defined on hinge-graph addmissible configurations. 
Let us start by defining SGMs for general admissibility graphs $G$. 
Let 
$$z\colon x\mapsto z_x=(z_{0,x}, z_{1,x}, \dots, z_{m,x}) \in \mathbb R^{m+1}_+$$ 
be a
vector-valued function on $V$. Then, given $n=1,2,\ldots$ and an activity $\l=(\lambda_{i,j})_{\{i,j\}\in K}$,
consider the probability distribution  $\mu^{(n)}$ on
$\Omega_{V_n}^G$, defined as
\begin{equation}\label{rus2.1}
\mu^{(n)}(\sigma_n)=\frac{1}{Z_n}\prod_{\langle x, y\rangle\in V_n}\lambda_{\sigma_n(x),\sigma_n(y)} \prod_{x\in
W_n}z_{\sigma(x),x},
\end{equation}
where $\sigma_n=\sigma_{V_n}$. 
Here $Z_n$ is the {\em partition function}
$$
Z_n=\sum_{{\widetilde\sigma}_n\in\Omega^G_{V_n}}
\prod_{\langle x, y\rangle\in V_n}\lambda_{\widetilde\sigma_n(x),\widetilde\sigma_n(y)}\prod_{x\in W_n}
z_{{\widetilde\sigma}(x),x}.
$$
The sequence of probability distributions $(\mu^{(n)})_{n\ge 1}$ is called {\em compatible} if,
for all $n\geq 1$ and $\sigma_{n-1}$, we have that 
\begin{equation}\label{rus2.2}
\sum_{\omega_n\in\Omega^G_{W_n}}
\mu^{(n)}(\sigma_{n-1}\vee\omega_n){\mathbf 1}\{
\sigma_{n-1}\vee\omega_n\in\Omega^G_{V_n}\}=
\mu^{(n-1)}(\sigma_{n-1}),
\end{equation}
where $\sigma_{n-1}\vee\omega_n$ is the concatenation of the configurations $\sigma_{n-1}$ and $\omega_n$. 
Note that, by Kolmogorov's extension theorem, for a compatible sequence of distributions, there exists a unique measure $\mu$ on $\Omega^G$ such that,
for all $n$ and $\sigma_n\in \Omega^G_{V_n}$,
$$\mu(\{\sigma|_{V_n}=\sigma_n\})=\mu^{(n)}(\sigma_n).$$
This motivates the following definition. 
\begin{defn} We call the measure $\mu$, defined by \eqref{rus2.1} and \eqref{rus2.2}, the {\em splitting Gibbs measure} corresponding to the activity $\lambda$ and the function $z\colon x\in V
\setminus\{x^0\}\mapsto z_x$.
\end{defn}
Let $A=
A^G=\big(a_{ij}\big)_{\{i,j\}\in K}$ denote the adjacency matrix of
$G$, i.e.,
$$ a_{ij}= a^G_{ij}=\left\{\begin{array}{ll}
1,\ \ \mbox{if}\ \ \{i,j\}\in K,\\
0, \ \ \mbox{if} \ \  \{i,j\}\notin K,
\end{array}\right.$$
then, the following statement describes conditions on $z_x$ guaranteeing
compatibility of the distributions $(\mu^{(n)})_{n\ge 1}$.

\begin{thm}\label{rust1} The sequence of probability
distributions $(\mu^{(n)})_{n\ge 1}$ in \eqref{rus2.1} are 
compatible if and only if, for any $x\in V$, the following system of equations
holds
\begin{equation}\label{jr}
z_{i,x}= \prod_{y\in S(x)}{\sum_{j=0}^{m-1}a_{ij}\lambda_{i,j}z_{j,y}+a_{im}\lambda_{i,m}\over
\sum_{j=0}^{m-1}a_{mj}\lambda_{m,j}z_{j,y}+a_{mm}\lambda_{m,m}}, \ \ i=0,1,\dots,m-1.
\end{equation}
\end{thm}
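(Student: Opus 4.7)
The plan is to verify both directions by a direct computation that exploits the product structure of \eqref{rus2.1} together with the fact that, on a tree, the edges between $W_{n-1}$ and $W_n$ decompose as a disjoint union over $x\in W_{n-1}$ of the ``star'' $\{\langle x,y\rangle\colon y\in S(x)\}$, and the sphere $W_n$ carries no internal edges. This is the standard tree-recursion derivation, c.f.\ the references in the introduction, adapted to the present admissibility framework encoded by the adjacency matrix $A$.

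\textbf{Forward direction.} Starting from \eqref{rus2.1}, I would split the edge product as
\[
\prod_{\langle x,y\rangle\in V_n}\lambda_{\sigma(x),\sigma(y)}=\Big(\prod_{\langle x,y\rangle\in V_{n-1}}\lambda_{\sigma_{n-1}(x),\sigma_{n-1}(y)}\Big)\prod_{x\in W_{n-1}}\prod_{y\in S(x)}\lambda_{\sigma_{n-1}(x),\omega_n(y)},
\]
and similarly use that $W_n=\bigsqcup_{x\in W_{n-1}}S(x)$ to factorize $\prod_{y\in W_n}z_{\omega_n(y),y}$. Because admissibility on $V_n$ reduces, given $\sigma_{n-1}\in\Omega^G_{V_{n-1}}$, to the independent conditions $\{\sigma_{n-1}(x),\omega_n(y)\}\in K$ for each $x\in W_{n-1}$ and $y\in S(x)$, the sum over $\omega_n$ in \eqref{rus2.2} factorizes and yields, for $i=\sigma_{n-1}(x)$,
\[
\sum_{\omega_n}\mu^{(n)}(\sigma_{n-1}\vee\omega_n)\mathbf{1}\{\cdot\in\Omega^G_{V_n}\}=\frac{1}{Z_n}\prod_{\langle x,y\rangle\in V_{n-1}}\lambda_{\sigma_{n-1}(x),\sigma_{n-1}(y)}\prod_{x\in W_{n-1}}\prod_{y\in S(x)}\sum_{j=0}^{m}a_{ij}\lambda_{i,j}z_{j,y}.
\]
Comparing this with $\mu^{(n-1)}(\sigma_{n-1})$, compatibility is equivalent to the existence, for each $x\in W_{n-1}$, of a constant $C(x)$, independent of $i=\sigma_{n-1}(x)$, such that $\prod_{y\in S(x)}\sum_{j=0}^{m}a_{ij}\lambda_{i,j}z_{j,y}=C(x)\,z_{i,x}$ for all $i\in\{0,\dots,m\}$. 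Taking the ratio with the $i=m$ identity eliminates $C(x)$ and, after normalizing so that $z_{m,x}\equiv 1$ (which is a harmless rescaling of $z$ since the definition \eqref{rus2.1} is invariant under $z_{\cdot,x}\mapsto c_xz_{\cdot,x}$), one recovers exactly \eqref{jr}.

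\textbf{Converse.} Assuming \eqref{jr} with the convention $z_{m,x}=1$, I would define $Z_n$ recursively by $Z_n/Z_{n-1}=\prod_{x\in W_{n-1}}\prod_{y\in S(x)}\bigl(\sum_{j=0}^{m-1}a_{mj}\lambda_{m,j}z_{j,y}+a_{mm}\lambda_{m,m}\bigr)$ and plug back into the factorized identity above. The recursion \eqref{jr} then makes the inner products on the right equal to $z_{i,x}$ times the prescribed denominator, so the whole expression collapses to $\mu^{(n-1)}(\sigma_{n-1})$, establishing \eqref{rus2.2}.

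\textbf{Main obstacle.} The only genuinely nontrivial point is the bookkeeping for the normalization: one must argue that the per-vertex constant $C(x)$ is forced to be $i$-independent and that absorbing the arising multiplicative freedom into the convention $z_{m,x}=1$ is precisely what reduces the $(m+1)$-dimensional fixed-point problem to the $m$ equations indexed by $i=0,\dots,m-1$ in \eqref{jr}. Once this is handled, the rest is essentially bookkeeping made possible by the tree structure and the product form of $\mu^{(n)}$.
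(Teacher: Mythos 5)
Your proposal is correct and follows essentially the same route as the paper: the paper's proof is given only by reference to \cite{Ros} and \cite{Ro12}, and those proofs proceed exactly by your decomposition of the edge product over the stars $\{\langle x,y\rangle\colon y\in S(x)\}$, factorization of the sum over $\omega_n$, extraction of an $i$-independent per-vertex constant $C(x)$ (obtained there by comparing boundary configurations that differ at a single vertex, which the connected hinge graph with loops permits), and division by the $i=m$ identity under the normalization $z_{m,x}\equiv1$. Your ``main obstacle'' is precisely the step handled this way in the cited proofs, so the plan closes with no genuine gap.
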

\begin{proof}
The proof is similar to the proof of~\cite[Theorem 1]{Ros} and~\cite[Proposition 2.1]{Ro12}.  
\end{proof}
Note that, in~\eqref{jr}, the normalization is at the spin state $m$, i.e., we assume that, for all $x\in V$, we have  $z_{m,x}=1$.

\medskip
In the remainder of the manuscript, we restrict our choice of activities in order to make contact to $p$-SOS models defined via the formal Hamiltonian
\begin{align}\label{def_pSOS}
 H(\sigma)=-J\sum_{\langle x,y\rangle}
|\sigma(x)-\sigma(y)|^p,
\end{align}
for $p>0$ and coupling constant $J\in \mathbb{R}$, see
\cite{CKL,Ge,Ro,RoP} and references therein. The present note then presents a continuation of previous investigations related to $p$-SOS models on trees with $p>0$, but now in the case where $p=\infty$.
More precisely, we denote $\theta=\exp(J)$ and consider the activity $\l=(\lambda_{i,j})_{\{i,j\}\in K}$ defined as
\begin{equation}\label{isos}
\lambda_{i,j}=\left\{\begin{array}{lll}
1, \ \ \mbox{if} \ \ i=j\\[2mm]
\theta, \ \ \mbox{if} \ \ |i-j|=1,\\[2mm]
\infty, \ \ \mbox{otherwise.}
\end{array}
\right.
\end{equation}
We call the resulting hinge-graph model (with hinge-graph as in  Figure~\ref{graph}) the {\em $\infty$-SOS model}. Consequently, for the $\infty$-SOS model with  activity~\eqref{isos}, the equation~\eqref{jr} reduces to
\begin{equation}\label{equa}
\begin{array}{lll}
z_{0,x}=\prod_{y\in S(x)}{z_{0,y}+\theta z_{1,y}\over \theta z_{m-1,y}+1}\\[3mm]
z_{i,x}=\prod_{y\in S(x)}{\theta z_{i-1,y}+z_{i,y}+\theta z_{i+1,y}
    \over \theta z_{m-1,y}+1}, \ \ i=1,\dots, m-1\\[3mm]
  z_{m,x}= 1,
    \end{array}
   \end{equation}
and, by Theorem~\ref{rust1}, for any $z=\{z_x\colon x\in V\}$
satisfying~\eqref{equa}, there exists a unique SGM $\mu$ for the $\infty$-SOS. However,
the analysis of solutions to~\eqref{equa} for an arbitrary $m$ is challenging. We therefore restrict our attention to a smaller class of measures, namely the translation-invariant SGMs.

\section{Translation-invariant SGMs for the $\infty$-SOS model with $m=2$}
Searching only for translation-invariant measures, the functional equation~\eqref{equa} reduces to 
\begin{equation}
\label{sti}
\begin{array}{lll}
z_{0}=\left({z_{0}+\theta z_{1}\over \theta z_{m-1}+1}\right)^k,\\[3mm]
z_{i}=\left({\theta z_{i-1}+z_{i}+\theta z_{i+1}
    \over \theta z_{m-1}+1}\right)^k, \ \ i=1,\dots, m-1,\\[3mm]
  z_{m}= 1.
    \end{array}
 \end{equation}
In the following we restrict our attention to the case where $m=2$. In this case, denoting $x=\sqrt[k]{z_0}$ and $y=\sqrt[k]{z_1}$, from ~\eqref{sti} we get 
\begin{equation}\label{si}
x={x^k+\theta y^k\over \theta y^k+1}\qquad \text{ and }\qquad
y={\theta x^k+y^k+\theta \over \theta y^k+1}.
     \end{equation}
In particular, considering only the first equation of this system, we find the solutions $x=1$ and 
\begin{equation}\label{ky} 
\theta y^k=x^{k-1}+x^{k-2}+\dots +x.
\end{equation}
We start by investigating the case $x=1$. 
\subsection{Case $x=1$}
In this case, from the second equation in \eqref{si}, we get that
\begin{equation}\label{yk}
    \theta y^{k+1}-y^k+y-2\theta=0
\end{equation}
and hence, as a direct application of Descartes' rule of signs, the following statement follows.
\begin{lemma} For all $k\geq 2$, there exist at most three positive roots for~\eqref{yk}.
\end{lemma}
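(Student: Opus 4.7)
The plan is to apply Descartes' rule of signs directly, as the paper suggests. Rewriting \eqref{yk} as a polynomial in $y$ of degree $k+1$, one has
$$P(y) = \theta y^{k+1} - y^k + 0\cdot y^{k-1} + \cdots + 0\cdot y^2 + y - 2\theta,$$
where I would stress that, since $\theta = \exp(J) > 0$, all the nonzero coefficients have a definite sign. Specifically, the leading coefficient is $+\theta$, the coefficient of $y^k$ is $-1$, the coefficients of $y^{k-1}, \ldots, y^2$ are $0$ (and this range is nonempty precisely because $k\geq 2$), the coefficient of $y$ is $+1$, and the constant term is $-2\theta$.

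The main (and only) step is then to count sign changes in the sequence of nonzero coefficients of $P$. Discarding the vanishing middle coefficients, the sequence of signs reads $+,\,-,\,+,\,-$, which exhibits exactly three sign changes. Descartes' rule of signs asserts that the number of positive real roots of $P$ (counted with multiplicity) is bounded above by the number of sign changes, so $P$ admits at most three positive roots, as claimed.

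I do not anticipate any genuine obstacle, since the lemma is essentially a bookkeeping application of Descartes' rule. The only subtle point is ensuring that the argument is uniform in $k\geq 2$: this is why I would explicitly note that the block of vanishing coefficients between $y^k$ and $y$ exists for every $k \geq 2$, and that the sign pattern $+,-,+,-$ of the nonzero coefficients is independent of $k$. For $k=2$ there are simply no zero coefficients in between, but the sign pattern, and hence the count of three sign changes, remains the same.
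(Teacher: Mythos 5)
Your proof is correct and matches the paper's approach exactly: the paper itself derives the lemma ``as a direct application of Descartes' rule of signs'' to $\theta y^{k+1}-y^k+y-2\theta$, whose nonzero coefficients have sign pattern $+,-,+,-$ with three sign changes, precisely as you argue. One trivial slip: the block of vanishing coefficients between $y^k$ and $y$ is nonempty only for $k\geq 3$ (not ``precisely because $k\geq 2$''), but as you yourself note at the end, for $k=2$ the sign pattern and the count of three sign changes are unchanged, so the argument is unaffected.
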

 
For small values of $k$, i.e., $k=2,3$, we can solve~\eqref{yk} explicitly and exhibit regions of $\theta$ where there are exactly three solutions. 

\subsubsection{Case $k=2$} 
In this case, the equation~\eqref{yk} coincides with the corresponding equation for $p=\infty$ found in~\cite{JR}. The solutions are presented in Section~\ref{sec_4} below. 
\subsubsection{Case $k=3$}\label{sec_13} In this case, we can rearrange~\eqref{yk} as
$$\theta={y^3-y\over y^4-2}=:\alpha(y),$$
where we assumed $y\ne \sqrt[4]{2}$ since $y=\sqrt[4]{2}$ is not a solution. 
It follows from Figure~\ref{theta} that up to three solutions $y$ for~\eqref{yk} appear as follows. 
\begin{figure}[h]
\begin{center}
\includegraphics[width=6.5cm]{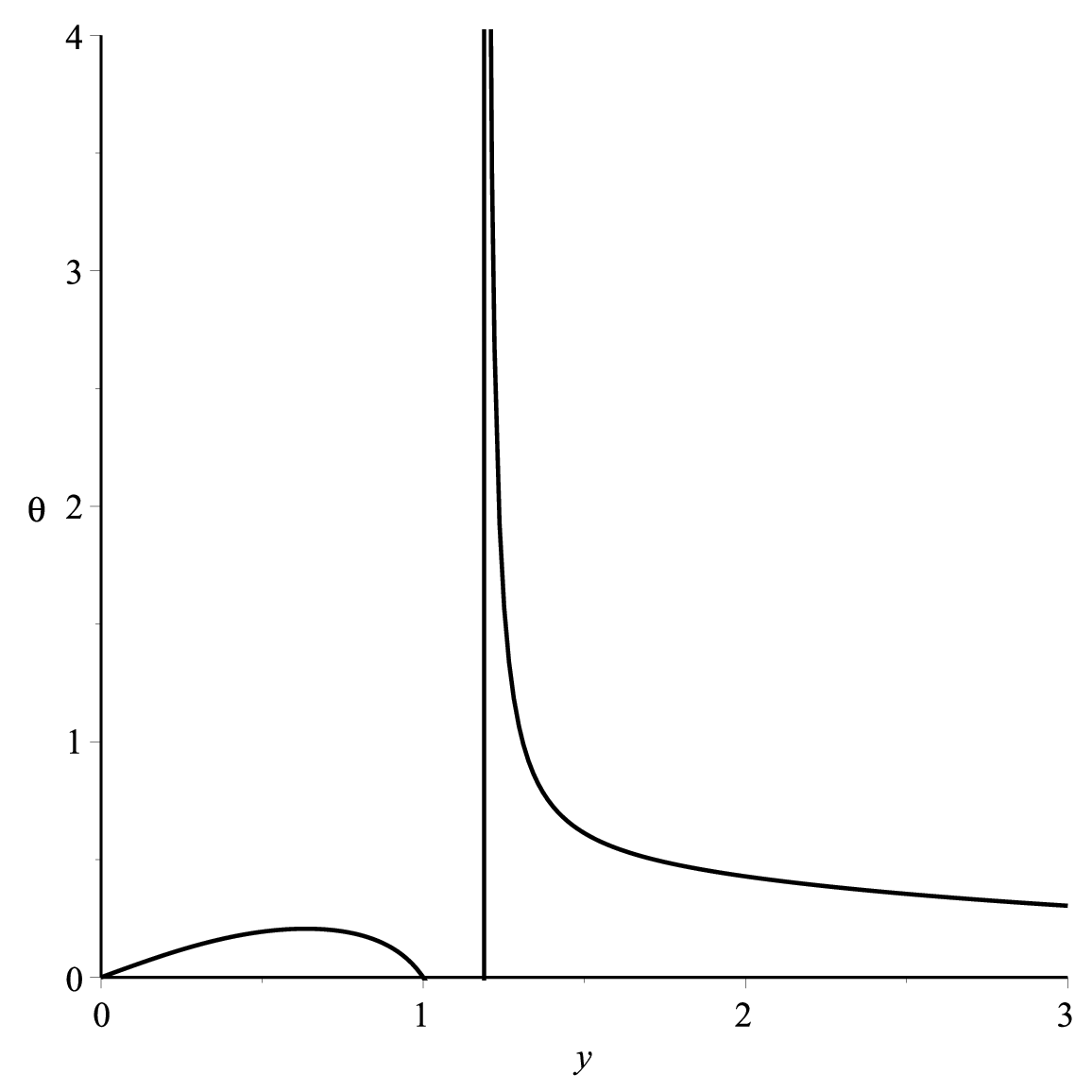} \ \ \ \ \
	\end{center}
	\caption{Graph of the function $\alpha(y)$.}\label{theta}
\end{figure}
There exists $\theta_c\approx 0.206$ such that we have:
\begin{itemize}
\item[1)] If $\theta\in (0,\theta_c)$, then there are three solutions $y_1>\sqrt[4]{2}$ and $y_2<y_3<1$.
\item[2)] If $\theta=\theta_c$, then there are two solutions $y_1>\sqrt[4]{2}$ and $y_2<1$.
\item[3)] If $\theta>\theta_c$, then there is a unique solution $y_1>\sqrt[4]{2}$.
\end{itemize}
We can derive the value of $\theta_c$ explicitly as follows. The equation $\alpha'(y)=0$ can be solved explicitly as  
$$y_0=\sqrt{1-c+1/c}\approx 0.635, \ \ \mbox{with} \ \ c=\sqrt[3]{1+\sqrt{2}}.$$
Since $y_0\in (0, 1)$, the critical value $\theta_c$ is thus given by
\begin{equation}\label{tc}\theta_c=\alpha(y_0)={(1-c)\sqrt{c(1+c-c^2)}\over c^4-3c^2+2c-2\sqrt{2}-1}.
\end{equation}

\subsection{Case $x\ne 1$}
Let us consider the second situation as presented in~\eqref{si}.
\subsubsection{Case $k=2$}
In this case, using ~\eqref{ky} and the second equation in~\eqref{si}, we get
\begin{equation}\label{2x}
\theta^4x^4+(2\theta^2-\theta)x^3+(2\theta^4-2\theta+1)x^2+(2\theta^2-\theta)x+\theta^4=0,
\end{equation}
which is a polynomial with symmetric coefficients and hence, denoting $\xi=x+1/x$,~\eqref{2x} can be rewritten as
$$ \theta^4(\xi^2-2)+(2\theta^2-\theta)\xi+(2\theta^4-2\theta+1)=0, $$
which is equivalent to
\begin{equation}\label{2xi}
\theta^4 \xi^2+(2\theta^2-\theta)\xi+1-2\theta=0.
\end{equation}
But, this equation has two solutions $\xi_{1,2}$ given in~\eqref{12i} below 
and thus we have up to four additional solutions. Again, this case coincides with the corresponding equation for $p=\infty$ found in~\cite{JR} and the overview of solutions is presented in Section~\ref{sec_4} below.

\subsubsection{Case $k=3$} In this case, by~\eqref{ky} and the second equation in~\eqref{si} we obtain 
\begin{equation}\label{x8}
\begin{split}
\theta^6 x^8+(-\theta^6+3\theta^4-\theta^2)x^7+&\theta^6x^6+(2\theta^6-3\theta^2+1)x^5+(-2\theta^6+6\theta^4-7\theta^2+2)x^4\\
&+(2\theta^6-3\theta^2+1)x^3+\theta^6x^2+(-\theta^6+3\theta^4-\theta^2)x+\theta^6=0
\end{split}
\end{equation}
and this equation may have up to eight positive solutions since, if $\theta<\theta_c''\approx 0.605$, the number of sign changes in the coefficients is eight. However, by computer analysis we can show that there exists $\widehat \theta_c\approx 0.4812$ such that, if $\theta<\widehat\theta_c$, then~\eqref{x8} has precisely four positive solutions, if $\theta=\widehat\theta_c$, then there are precisely two solutions and, if $\theta>\widehat\theta_c$, then there exists no positive solution (see Figure~\ref{x8c} for an implicit plot). For example, if $\theta=0.481$, then the four positive solutions are given approximately by 
$$ 0.2072006567,0.2260627940,4.423549680, \text{ and  }4.826239530.$$
\begin{figure}[h]
	\begin{center}
\includegraphics[width=7.3cm]{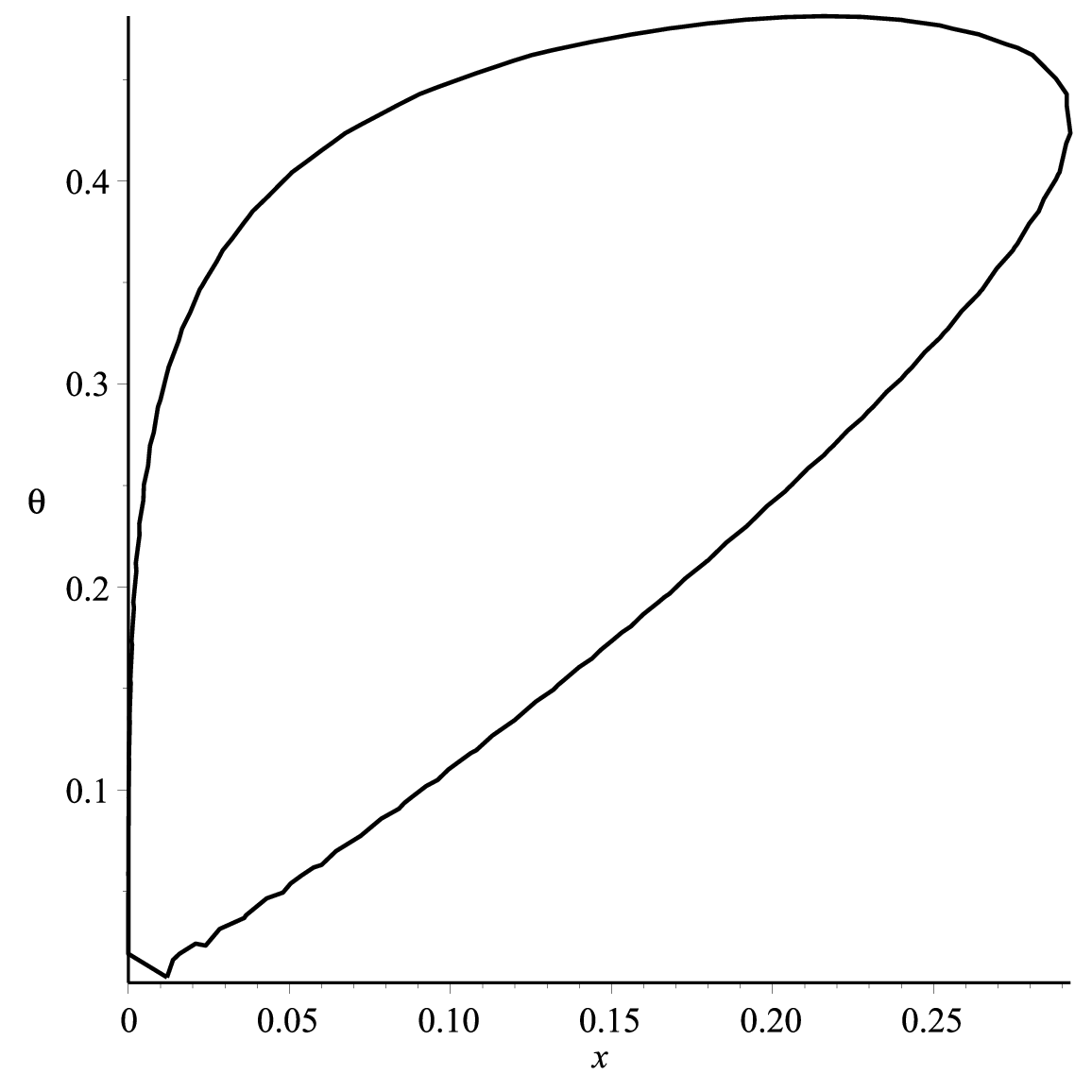}
\includegraphics[width=7.3cm]{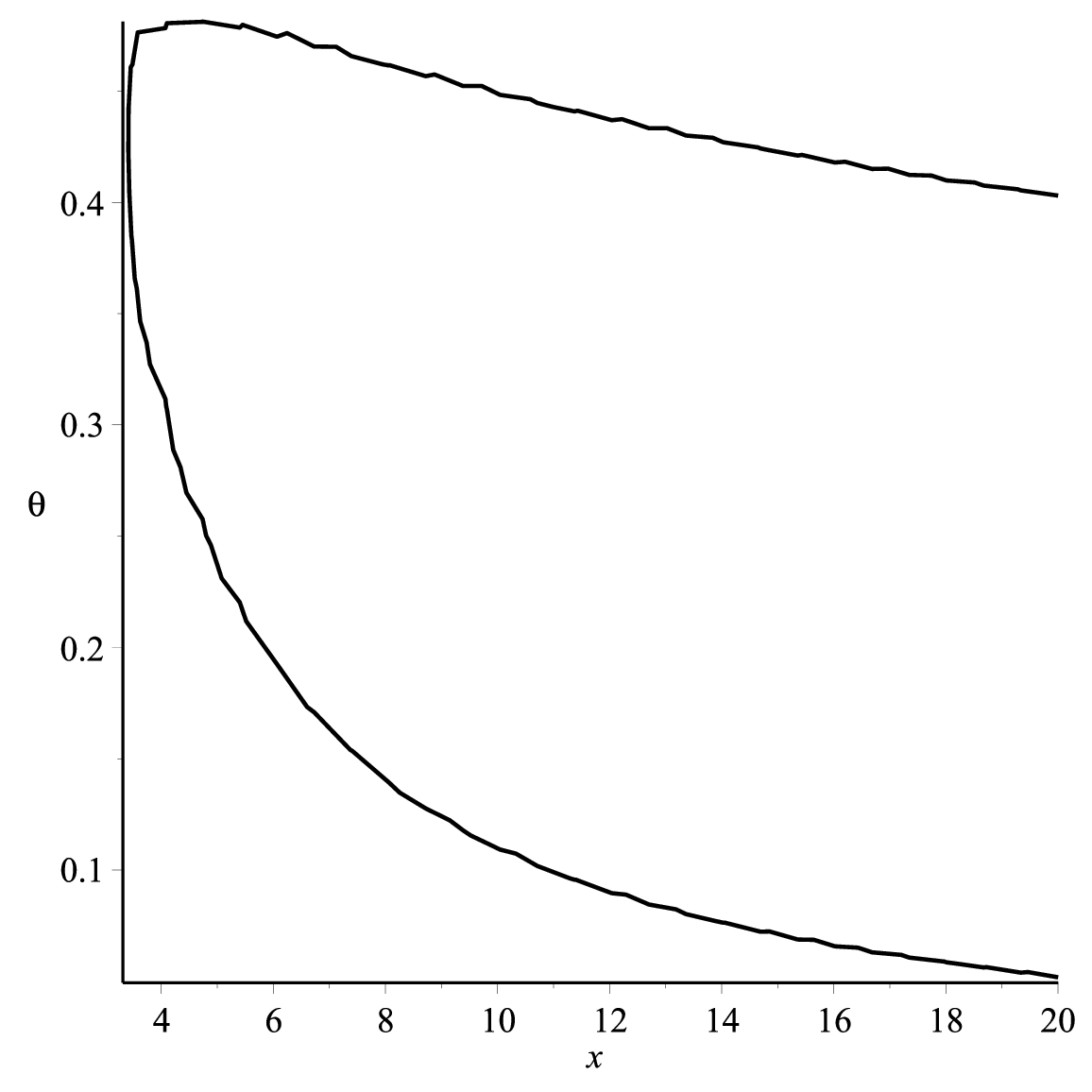}
	\end{center}
\caption{The implicit plot of~\eqref{x8}. Left: For $\theta\in (0, \widehat\theta_c)$, there are two solutions in $(0, 0.2927)$. Right: For the same region of $\theta$ there are two fixed points in $(3.41, \infty)$ since both curves towards the right side converge to zero.}\label{x8c}
\end{figure}

Let us provide an analytical proof of this as well.
Since~\eqref{x8} is symmetric, we divide both sides by $x^4$ and denote $\eta=x+1/x$. Then, we arrive at 
$$\theta^6((\eta^2-2)^2-2)+(-\theta^6+3\theta^4-\theta^2)\eta(\eta^2-3)+\theta^6(\eta^2-2)+(2\theta^6-3\theta^2+1)\eta-2\theta^6+6\theta^4-7\theta^2+2=0$$
and consequently
\begin{equation}\label{eta}
\theta^6\eta^4+(-\theta^6+3\theta^4-\theta^2)\eta^3-3\theta^6\eta^2+(5\theta^6-9\theta^4+1)\eta-2\theta^6+6\theta^4-7\theta^2+2=0.
 \end{equation}
Thus, each solution $\eta>2$ of~\eqref{eta} defines two positive solutions to~\eqref{x8}.  
Denoting $t=\theta^2$, we write~\eqref{eta} as the following cubic equation with respect to $t$, 
 $$
 (\eta^4-\eta^3-3\eta^2+5\eta-2)t^3+(3\eta^3-9\eta+6)t^2+(-\eta^3-7)t+\eta+2=0,
 $$
which is equivalent to
 \begin{equation}\label{t}
 (\eta-1)^3(\eta+2)t^3+3(\eta-1)^2(\eta+2)t^2-(\eta^3+7)t+\eta+2=0.
 \end{equation}
 Now, again by the rule of sign changes, for each $\eta>2$, this equation may have up to two positive solution $t=\theta^2=t(\eta)$. 
For $\eta>2$ let us introduce the new variables
\begin{equation}\label{wE}
w=(\eta-1)t>0\qquad \text{ and }\qquad E={\eta^3+7\over (\eta-1)(\eta+2)}=:b(\eta).
\end{equation}
With this,~\eqref{t} can be expressed as
\begin{equation}
    w^3+3w^2-Ew+1=0
\end{equation}
and we can solve the last equation with respect to $E$, which leads to
$$E=w^2+3w+1/w=:a(w).$$
Note that the function $a(w)$ is monotone decreasing between $0$ and $1/2$ (because $a'(w)=0$ has a unique positive solution $w=1/2$ and $a(0)=a(+\infty)=\infty$) and increasing when $w>1/2$. Thus, the minimal value of $E$ is given by $a(1/2)=15/4$ and hence, for each $E>15/4$, there are exactly two positive solutions $w_1=w_1(E)$ and $w_2=w_2(E)$, with $w_1<w_2$. If $E=15/4$, then there exists a unique $w_c=1/2$. If $E<15/4$, then there is no solution, see Figure~\ref{E}. 
\begin{figure}[h]
	\begin{center}
\includegraphics[width=7.5cm]{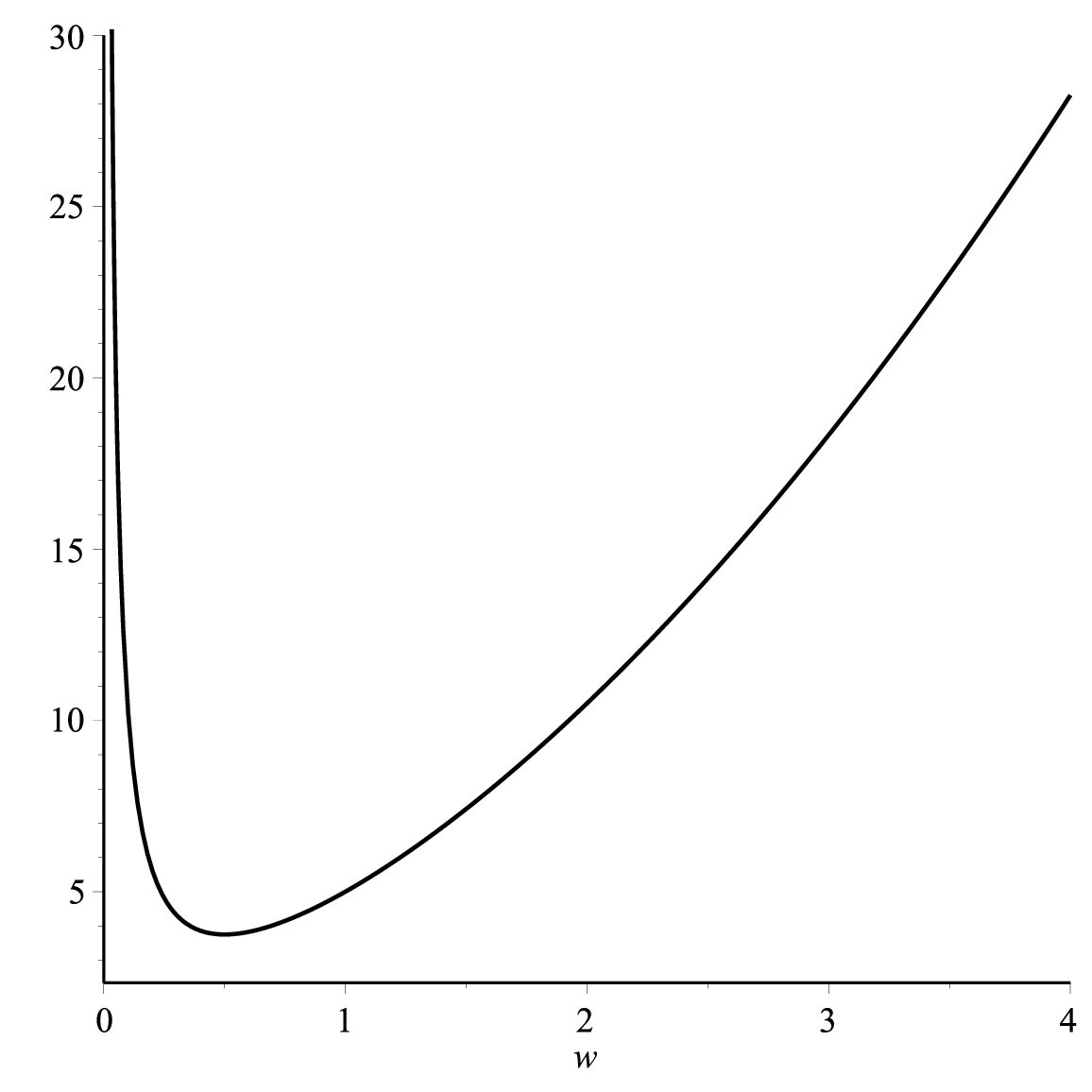} 
	\end{center}
	\caption{The graph of the function $E=a(w)$.}\label{E}
\end{figure}
From Figure~\ref{E} it is also easy to see that $w_1(E)$ is a decreasing function with values in $(0, 1/2]$ and $w_2(E)$ is an increasing function with values in $[1/2, \infty)$ with respect to the variable $E>15/4$. Moreover, 
$$\lim_{E\to\infty}w_1(E)=0\qquad \text{ and }\qquad\lim_{E\to\infty}w_2(E)=\infty,$$
and from the function $b(\eta)$ given in~\eqref{wE}, for $E=15/4$, we get that $$\eta_c=(7+3\sqrt{57})/8\approx 3.707.$$ 
Note that, for the derivative of $\eta\mapsto b(\eta)$, we have 
$$b'(\eta)={(\eta+1)^2(\eta^2-7)\over (\eta-1)^2(\eta+2)^2}.$$
Consequently, $b(\eta)$ is increasing for $\eta>\sqrt{7}\approx 2.65$
with a minimum value $b(\sqrt{7})=21/(1+2\sqrt{7})\approx 3.34$. For each $E\geq 15/4$, since $b(\eta)$ is an increasing function,  from $E=b(\eta)$ one obtains a unique $\eta\geq \eta_c$.  
As a consequence, $w_1(b(\eta))$ is a decreasing and $w_2(b(\eta))$ is an increasing functions of $\eta>\eta_c$.
Thus, $\eta$ is a solution to the following two independent equations, obtained from the first formula of~\eqref{wE},
\begin{equation}\label{w1}
w_1(b(\eta))=(\eta-1)\theta^2.
\end{equation}
\begin{equation}\label{w2}
w_2(b(\eta))=(\eta-1)\theta^2.
\end{equation}
\begin{figure}[h]
	\begin{center}
\includegraphics[width=7.3cm]{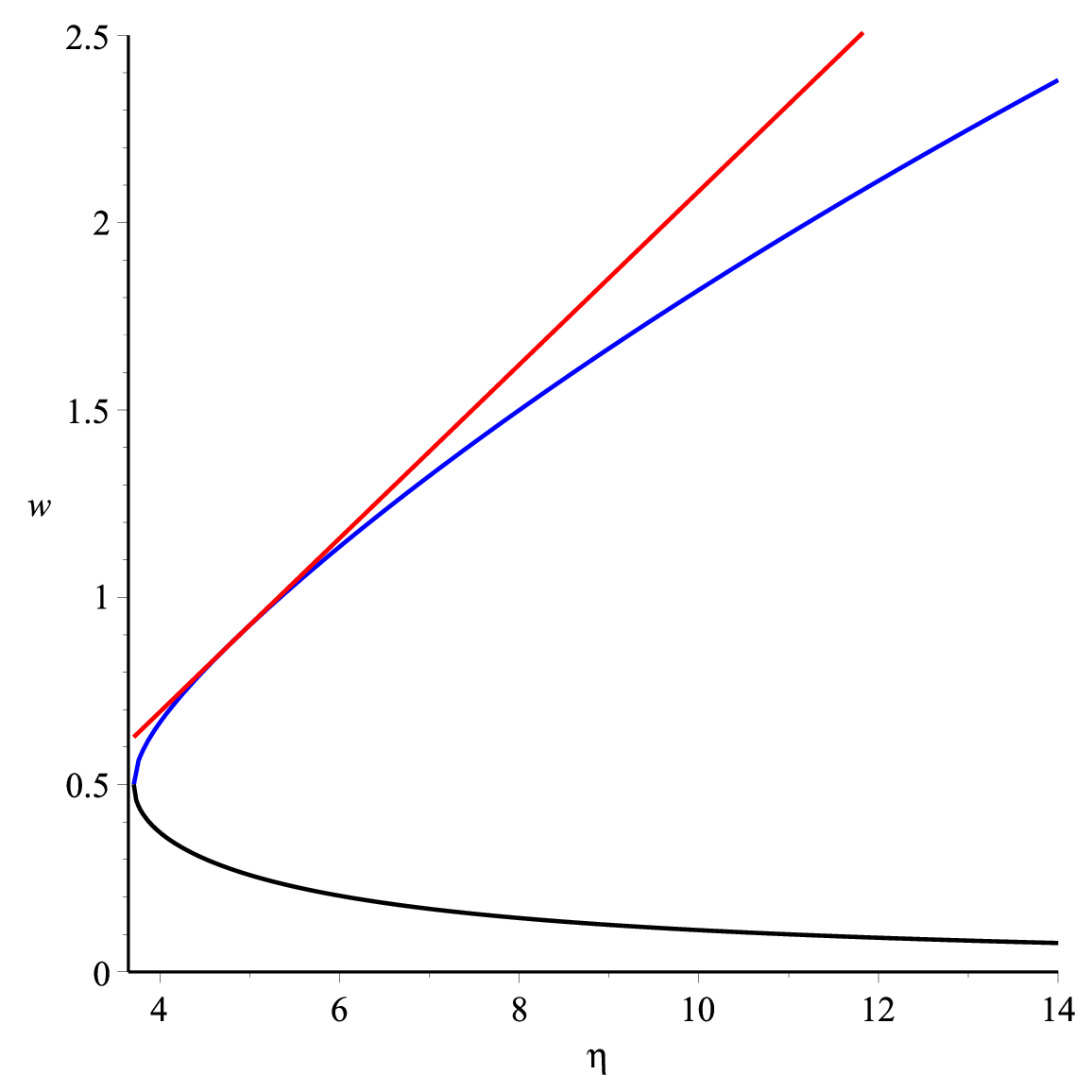}
\includegraphics[width=7.3cm]{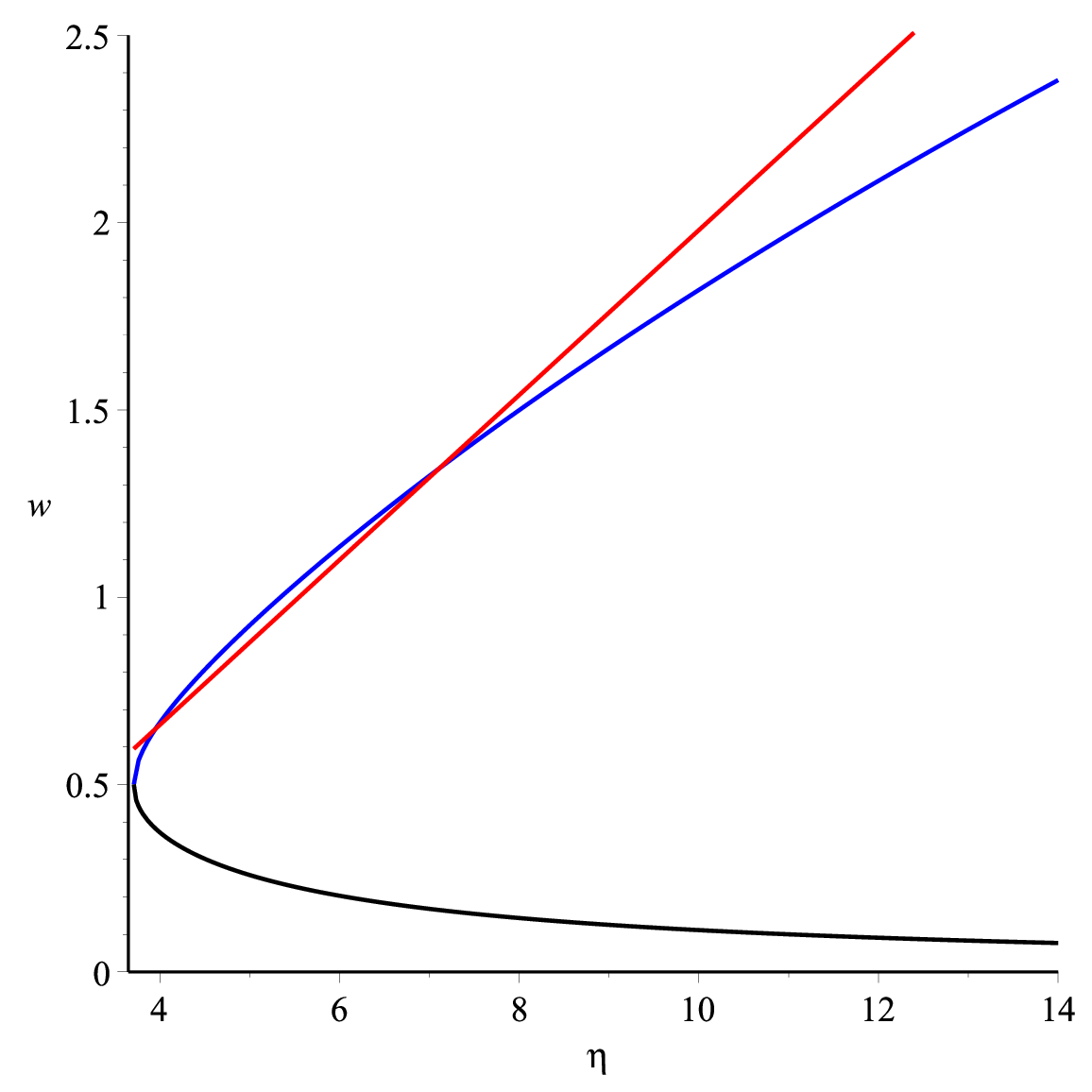}
\includegraphics[width=7.3cm]{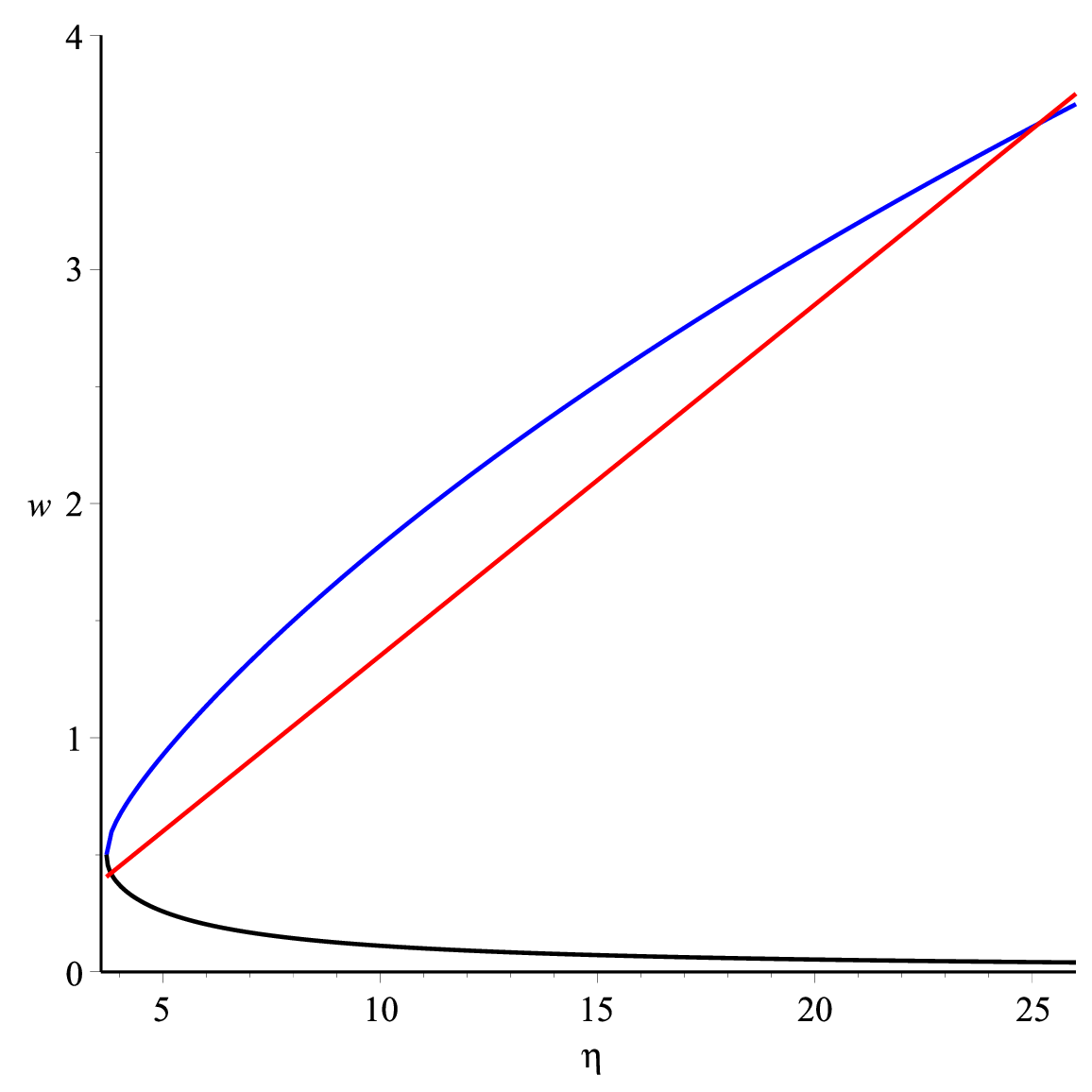}
	\end{center}
\caption{Up-left: The graphs of $\eta\mapsto(\eta-1)\widehat\theta_c^2$ (red),$\eta\mapsto w_2(b(\eta))$ (blue) and  $\eta\mapsto w_1(b(\eta))$ (black). Up-right: The graphs of $\eta\mapsto(\eta-1)0.22$ (red), $\eta\mapsto w_2(b(\eta))$ (blue) and  $\eta\mapsto w_1(b(\eta))$ (black). Down:  The graphs of $\eta\mapsto(\eta-1)0.15$ (red), $\eta\mapsto w_2(b(\eta))$ (blue) and $w_1(b(\eta))$ (black).}\label{w12c}
\end{figure}
Denoting by $\tilde\theta$ the positive solution of the equation
$$1/2=(\eta_c-1)\tilde\theta^2,$$
see \eqref{wE}, we have
$$\tilde\theta={2\over \sqrt{3\sqrt{57}-1}}\approx 0.4298.$$
Then, as can be seen in Figure~\ref{w12c}, the following assertions hold.
\begin{itemize}
\item[i.] If $\theta\in (0, \tilde\theta)$, then~\eqref{w1} and~\eqref{w2} have a unique solution $\eta>2$.
\item[ii.] If $\theta\in [\tilde\theta, \widehat\theta_c)$, then~\eqref{w1} has no solution, but~\eqref{w2} has two solutions greater than 2.
\item[iii.] If $\theta=\widehat\theta_c$, then~\eqref{w1} has no solution, but~\eqref{w2} has a unique solution.
\item[iv.] If $\theta>\widehat\theta_c$, then both equations have no solution.
\end{itemize}
Consequently, under the above mentioned Conditions 
i.-ii.,~\eqref{eta} has two solutions greater then 2, which define four positive solutions for~\eqref{x8}.

\begin{rk}To find the exact critical value $\widehat\theta_c$ mentioned above, one has to solve the following system of equation with respect to unknowns $\widehat\theta_c$ and $\eta_1$
$$w_2'(b(\eta_1))b'(\eta_1)=\widehat\theta_c^2\qquad \text{ and }\qquad w_2(b(\eta_1))=(\eta_1-1)\widehat\theta_c^2.$$
\end{rk}
With respect to Theorem~\ref{rust1}, we can summarize our results for $k=3$ in the following statement.
\begin{thm}\label{tii}
	For the $\infty$-SOS model with $m=2$ and $k=3$, there exist critical values $\theta_c\approx 0.206$ (given explicitly by ~\eqref{tc}) and $\widehat\theta_c\approx 0.4812$ such that 
	\begin{itemize}
		\item[1.] If $\theta>\widehat\theta_c$, then there is unique translation-invariant SGM.
		\item[2.] If  $\theta=\widehat\theta_c$, then there are three  translation-invariant SGMs.
		\item[3.] If  $\theta\in (\theta_c, \widehat\theta_c)$, then there are five  translation-invariant SGMs.
		\item[4.] If  $\theta=\theta_c$, then there are six  translation-invariant SGMs.
		\item[5.] If  $\theta\in (0, \theta_c)$, then there are seven translation-invariant SGMs.
	\end{itemize} 
\end{thm}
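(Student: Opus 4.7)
The plan is, via Theorem~\ref{rust1}, to identify translation-invariant SGMs with positive solution pairs $(x,y)$ of the reduced system~\eqref{si} for $m=2,\,k=3$ and then to enumerate these solutions in each $\theta$-regime by combining the two disjoint branches already isolated earlier in the section. The first equation of~\eqref{si} rearranges as
$$(x-1)\bigl(\theta y^k-(x^{k-1}+x^{k-2}+\cdots+x)\bigr)=0,$$
so every solution lies in exactly one of two branches: either $x=1$, governed by~\eqref{yk}, or $x\ne 1$ together with the relation~\eqref{ky}, which for $k=3$, upon substitution into the second equation of~\eqref{si}, produces the degree-eight polynomial~\eqref{x8}. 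The total count of translation-invariant SGMs is therefore the sum of the positive-solution counts on these two disjoint branches.

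On the branch $x=1$, I would reduce~\eqref{yk} to $\theta=\alpha(y)=(y^3-y)/(y^4-2)$ and perform the calculus analysis of Subsection~\ref{sec_13}. The derivative $\alpha'$ has the single positive zero $y_0=\sqrt{1-c+1/c}\in(0,1)$ with $c=\sqrt[3]{1+\sqrt{2}}$, so the horizontal line $\theta=\mathrm{const}$ meets the graph of $\alpha$ in three, two, or one points according as $\theta\in(0,\theta_c)$, $\theta=\theta_c$, or $\theta>\theta_c$, with $\theta_c=\alpha(y_0)$ given explicitly by~\eqref{tc}.

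On the branch $x\ne 1$, the palindromic structure of~\eqref{x8} suggests the substitution $\eta=x+1/x$, which is a two-to-one map from $(0,\infty)\setminus\{1\}$ onto $(2,\infty)$ and reduces~\eqref{x8} to the quartic~\eqref{eta}. Viewing~\eqref{eta} as the cubic~\eqref{t} in $t=\theta^2$ and then introducing $w=(\eta-1)t$ and $E=b(\eta)$ converts the problem into the one-variable equation $a(w)=E$, where $a(w)=w^2+3w+1/w$ attains its minimum $15/4$ at $w=1/2$. This splits the $w$-solution set into the branches $w_1(E)\le 1/2\le w_2(E)$ for $E\ge 15/4$, and since $b(\eta)=(\eta^3+7)/((\eta-1)(\eta+2))$ is strictly increasing on $(\sqrt 7,\infty)$ with $b(\eta_c)=15/4$, these pull back to monotone functions of $\eta\ge\eta_c=(7+3\sqrt{57})/8$. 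The $\eta$-solutions of~\eqref{eta} in $(2,\infty)$ are then the intersections of the line $\eta\mapsto(\eta-1)\theta^2$ with the graphs of $w_1\circ b$ and $w_2\circ b$; tracking the tangency at $\theta=\tilde\theta$ (where the line passes through the common value $1/2$ at $\eta=\eta_c$) and the tangency of the line with $w_2\circ b$ at $\theta=\widehat\theta_c$, as depicted in Figure~\ref{w12c}, establishes Conditions i.--iv., which, after the two-to-one lift $\eta\mapsto\{x,1/x\}$, yield four positive $x$-solutions for $\theta<\widehat\theta_c$, two at $\theta=\widehat\theta_c$, and none for $\theta>\widehat\theta_c$.

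Summing the counts from the two branches gives the claimed numbers $1,3,5,6,7$ in the five regimes $\theta>\widehat\theta_c$, $\theta=\widehat\theta_c$, $\theta\in(\theta_c,\widehat\theta_c)$, $\theta=\theta_c$, and $\theta\in(0,\theta_c)$ respectively. The main obstacle is the rigorous identification of $\widehat\theta_c$: while Figure~\ref{w12c} renders the transition transparent, a clean argument requires showing that the line $(\eta-1)\theta^2$ is tangent to $w_2\circ b$ at a unique $\eta_1>\eta_c$ with no additional intersections, which reduces to the tangency system of the remark and can be verified analytically by differentiating the explicit rational expressions for $a$ and $b$; I would also invoke Descartes' rule of signs on the cubic~\eqref{t} to confirm that Conditions i.--iv.\ are exhaustive.
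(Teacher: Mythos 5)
Your proposal is correct and follows the paper's own proof essentially step for step: the same two-branch decomposition (the $x=1$ branch analyzed via $\theta=\alpha(y)$ with $\theta_c=\alpha(y_0)$ as in~\eqref{tc}, and the $x\ne1$ branch via the palindromic substitution $\eta=x+1/x$, the cubic~\eqref{t} in $t=\theta^2$, and the change of variables $w=(\eta-1)t$, $E=b(\eta)$ reducing everything to $a(w)=E$ with minimum $15/4$ at $w=1/2$), followed by the same intersection count of the line $\eta\mapsto(\eta-1)\theta^2$ against $w_1\circ b$ and $w_2\circ b$ and the summation $1{+}0$, $1{+}2$, $1{+}4$, $2{+}4$, $3{+}4$ over the five regimes. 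Your closing caveat about rigorously pinning down $\widehat\theta_c$ matches the paper's own status, which likewise determines $\widehat\theta_c\approx0.4812$ by computer analysis, the tangency system of its remark, and Figure~\ref{w12c}.
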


\section{The $p$-SOS model with $p\to \infty$}\label{sec_4}

In this section we exhibit the functional equations corresponding to the $p$-SOS model defined by the Hamiltonian~\eqref{def_pSOS} and give results related to the case when $p\to\infty$. The limiting equations turn out to be the same equations as the ones for the $\infty$-SOS model.
Assuming $k=m=2$,~\cite{JR} establishes and analyzes the translation-invariant SGMs of the $p$-SOS model corresponding to the positive solutions of the following system
\begin{equation}\label{rs3.2a}
x={x^2+\theta y^2+\theta^{2^p} \over \theta^{2^p}x^2+\theta y^2+1},
\end{equation}
\begin{equation}\label{rs3.2b}
 y={\theta x^2+y^2+\theta \over \theta^{2^p}x^2+\theta y^2+1}.
\end{equation}
In the following, we establish limits for the obtained solutions when $p\to \infty$.
First, from~\eqref{rs3.2a} we get $x=1$ or
\begin{equation}\label{y}
\theta y^2=(1-\theta^{2^p})x-\theta^{2^p}(x^2+1).
\end{equation}
\begin{rk}\label{<1} 
Since $x>0$ we have that~\eqref{y} can hold iff $\theta<1$.
\end{rk}

\subsection{Case $0<\theta<1$.} Let us distinguish two subcases.
\subsubsection{Case $x=1$} Solving by computer the cubic equation
\begin{equation}\label{y3}
\theta y^3-y^2+(\theta^{2^p}+1)y-2\theta=0
\end{equation}
and taking limits of each solution as $p\to\infty$, we see that the solutions have the limits $y_i(\theta)$, $i=1,2,3$. The limiting functions have lengthy formulas, but their graphs can be simply plotted as shown in Figure~\ref{y1-y3-p}.
\begin{figure}[h]
	\begin{center}
\includegraphics[width=7.5cm]{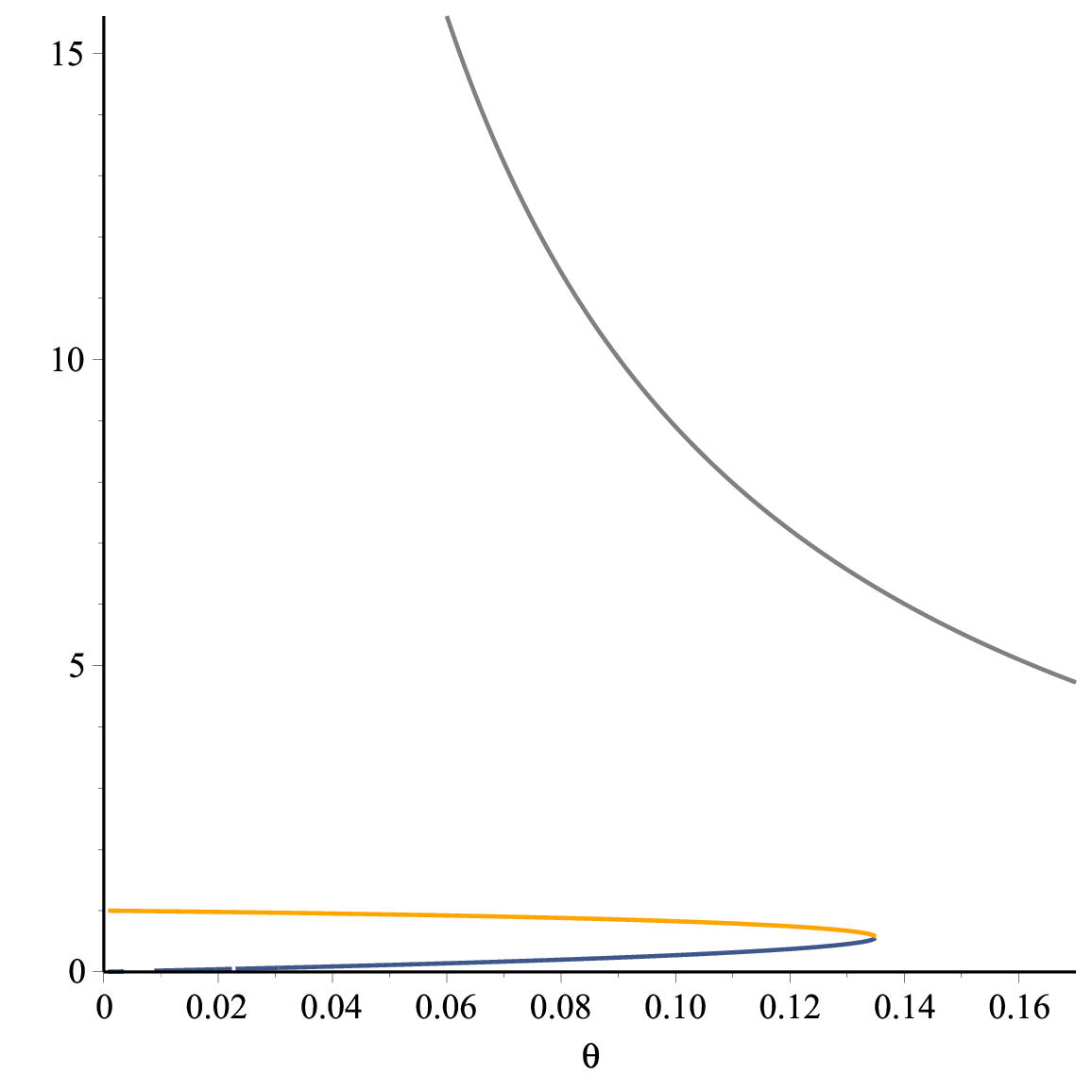} \ \ \ \ \
	\includegraphics[width=7.5cm]{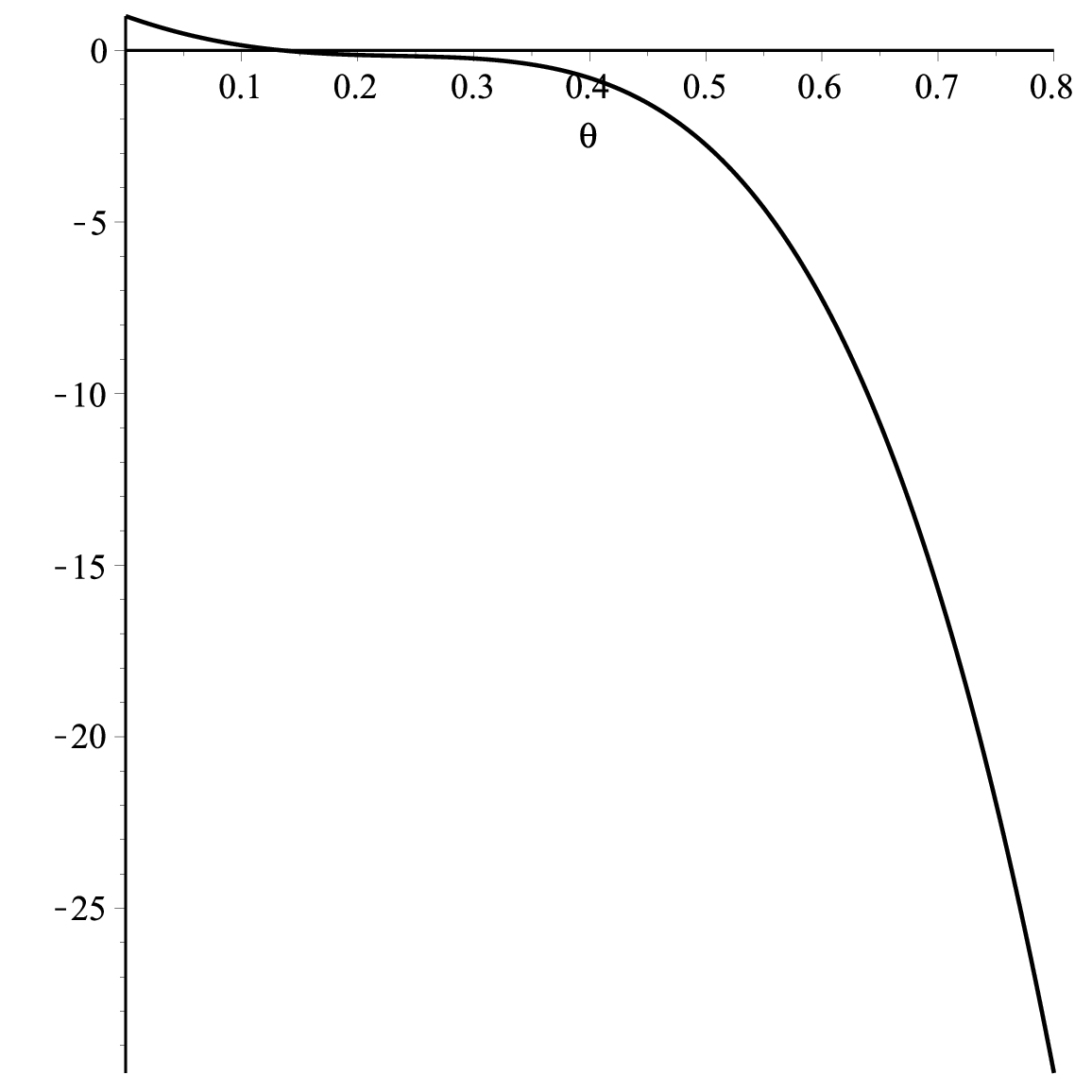} \ \ \
	\end{center}
	\caption{Left: The graphs of the functions $\theta\mapsto y_1(\theta)$ (gray), $\theta\mapsto y_2(\theta)$ (blue) and  $\theta\mapsto y_3(\theta)$ (orange). Right: The graph of the function $\theta\mapsto \Delta_0(\theta)$ for $\theta\in(0,1)$.}\label{y1-y3-p}
	\begin{center}
		\includegraphics[width=6.5cm]{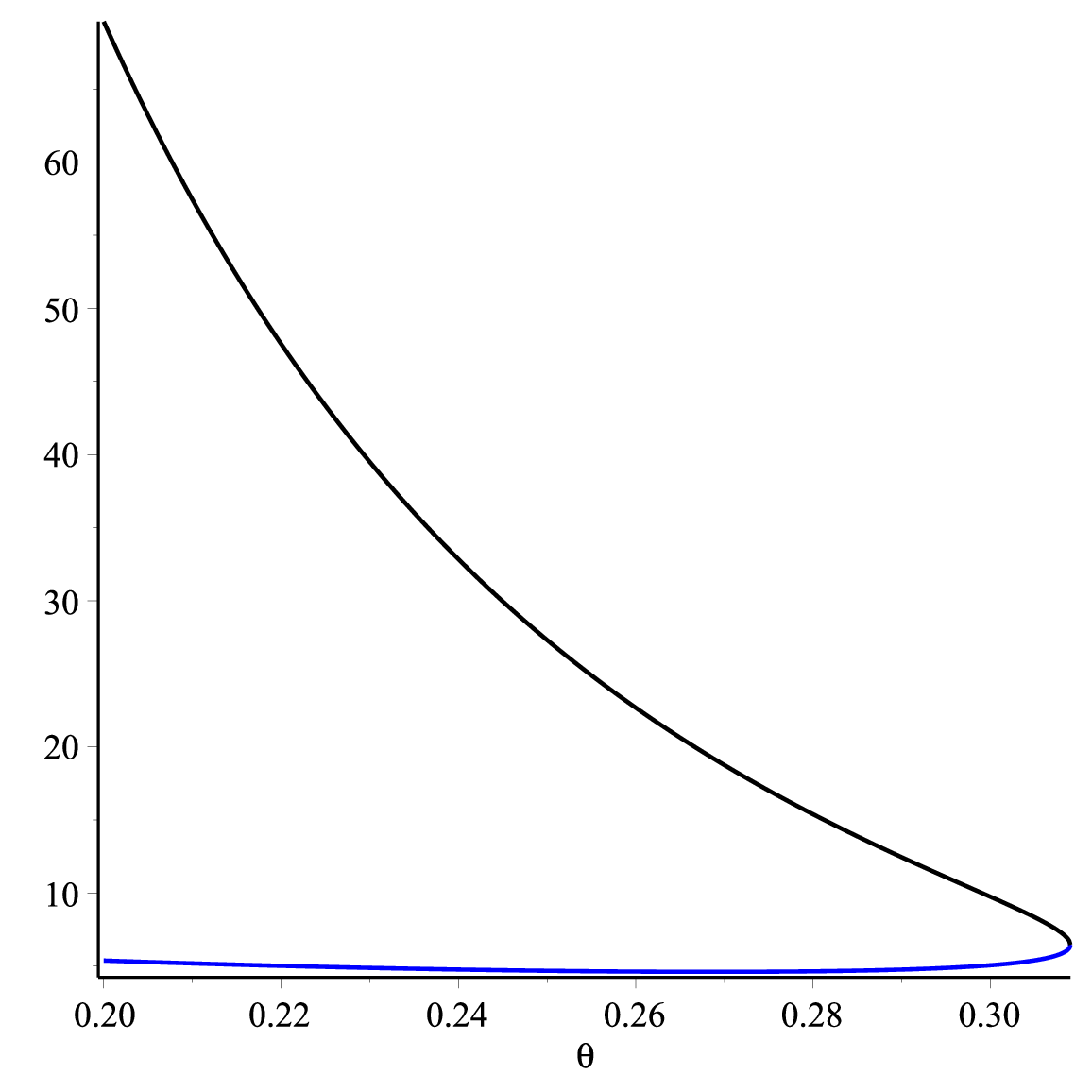} \ \ \ \ \
		\includegraphics[width=6.5cm]{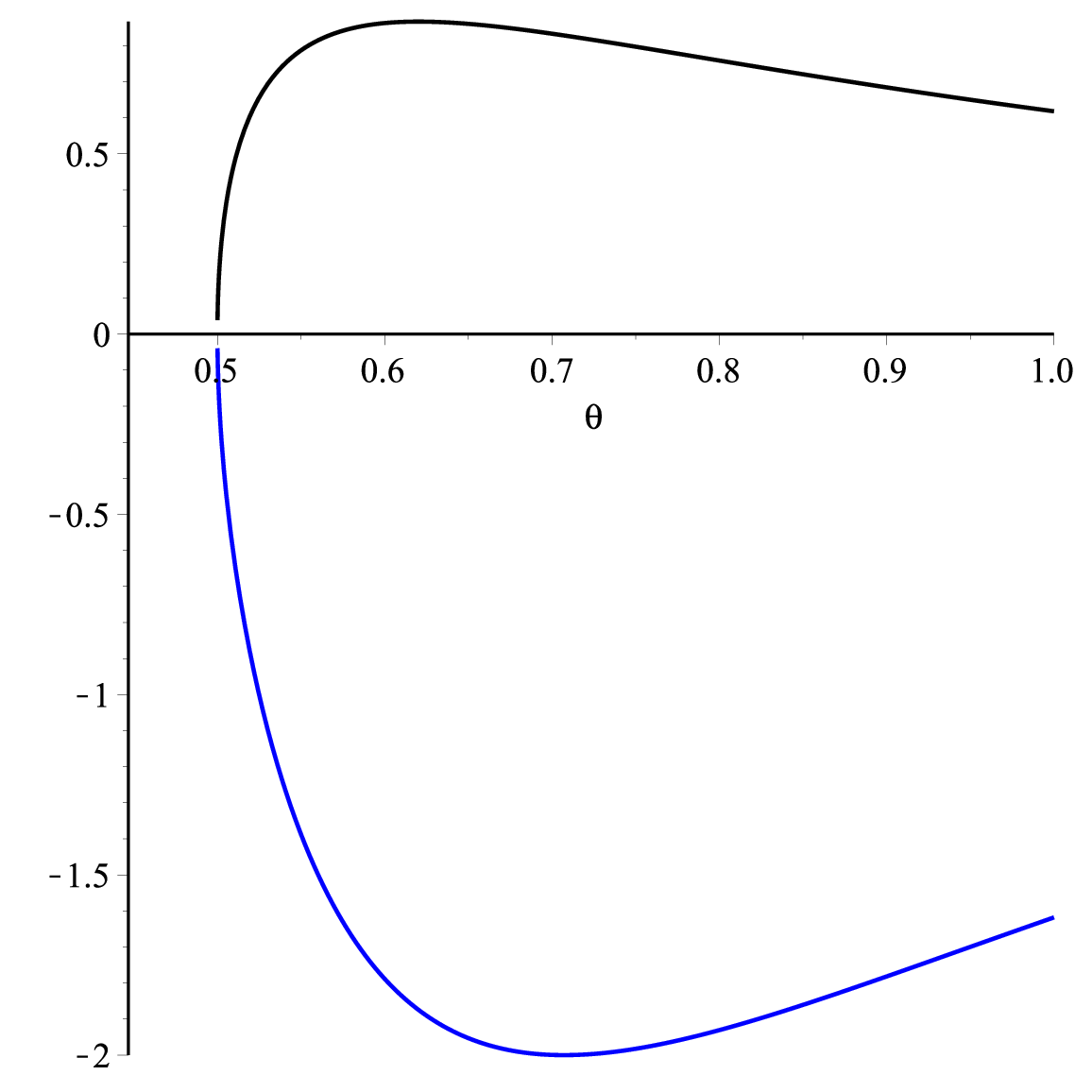}\ \ \
	\end{center}
	\caption{Left: The graphs of the functions $\theta\mapsto \xi_1(\theta)$ (blue) and $\theta\mapsto \xi_2(\theta)$ (black) for $\theta\in (0, (\sqrt{5}-1)/ 4]$. Right: The same graphs for $\theta\in [{1\over 2}, 1)$. }\label{x12}
\end{figure}
Moreover, the critical value of $\theta$ for existence of more than one solution is obtained from the discriminant of the cubic equation as $p\to \infty$, i.e.,  
$$\Delta_0(\theta)={1\over 27 \theta^2}\left(4(1-3\theta)^3-(2-9\theta+54\theta^3)^2\right)=0.$$
Hence, by Figure~\ref{y1-y3-p} it is clear that there exists a unique $\theta_0\approx 0.135$ such that $\Delta_0(\theta_0)=0$. 

\subsubsection{Case $x\ne 1$} 
In this case, there are up to four solutions when $p>0$ is fixed. These solutions are defined by the quantities $\xi_1(\theta,p)<\xi_2(\theta,p)$ given by
	\begin{equation}\label{12} 
 \xi_{1,2}(\theta,p):={q\over 2}{-3\theta q^2+2(\theta+1)q+2(\theta^2-1)\mp\theta\sqrt{q(q+2\theta-2)[(q-\theta-1)^2+(\theta+1)(3\theta-1)]}\over (q-\theta-1)[\theta q^2+(\theta^2-1)(q+\theta-1)]},
		\end{equation}
where $q=1-\theta^{2^p}$, see~\cite{JR}.
Moreover, if
\begin{equation}\label{C1}
	2<\xi_1(\theta, p)\leq \xi_2(\theta, p),
\end{equation} one can find all four positive solutions $x_i=x_i(\theta, p)$, $i=4,5,6,7$ explicitly.
In this case, as $p\to \infty$, we check the Condition~\eqref{C1}. From~\eqref{12} we get  
\begin{equation}\label{12i} 	
	\lim_{p\to \infty}\xi_{1,2}(\theta,p)=\xi_{1,2}(\theta):={1\over 2\theta^3}\cdot\left({1-2\theta\mp\sqrt{(2\theta-1)
(4\theta^2+2\theta-1)}}\right)
\end{equation}
and these numbers exist iff 
$$(2\theta-1)(4\theta^2+2\theta-1)\geq 0 \ \ \Leftrightarrow \ \ \theta\in (0, (\sqrt{5}-1)/4]\cup [1/2, 1).$$
Thus, Condition~\eqref{C1} is satisfied iff $\theta\in (0, (\sqrt{5}-1)/4]$ (see Figure~\ref{x12}). Now using~\eqref{12i}, for $\theta\in (0, (\sqrt{5}-1)/4))$, 
we obtain $x_i(\theta)$, $i=4,5,6,7$. Since the last $x_i$'s exist, 
we get 
$$y_i(\theta)=\sqrt{x_i(\theta)/\theta}, \ \ i=4,5,6,7.$$

\subsection{Case $\theta>1$} In this case, assuming $x=1$, 
from ~\eqref{y3} we get a unique solution for large $p$, which has a limit as $p\to\infty$. 
If $\theta>1$ and $x\ne 1$ then the statement of Remark~\ref{<1} is satisfied for any $p>0$ and therefore there is no solution.
We summarize the results of this section in the following statement which essentially says that the number of translation-invariant SGMs remains unchanged in the limiting model as $p\to \infty$.
\begin{pro}\label{po}
	For the $p$-SOS model, as $p\to \infty$, there exist critical values $\theta_0\approx 0.135$ and $\theta_0'=(\sqrt{5}-1)/ 4\approx 0.309$ such that 
	\begin{itemize}
		\item[1.] If $\theta>\theta_0'$, then there is a unique translation-invariant SGM.
		\item[2.] If  $\theta=\theta_0'$, then there are three  translation-invariant SGMs.
		\item[3.] If  $\theta\in (\theta_0, \theta_0')$, then there are five  translation-invariant SGMs.
		\item[4.] If  $\theta=\theta_0$, then there are six   translation-invariant SGMs.
		\item[5.] If  $\theta\in (0, \theta_0)$, then there are seven translation-invariant SGMs.
	\end{itemize} 
\end{pro}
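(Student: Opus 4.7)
The plan is to pass to the limit $p\to\infty$ inside the system~\eqref{rs3.2a}--\eqref{rs3.2b}. From~\eqref{rs3.2a} I would split into the dichotomy $x=1$ vs.\ $x\ne 1$, the second branch being ruled out for $\theta\ge 1$ by Remark~\ref{<1}. For $\theta\in(0,1)$ both branches contribute and must be analysed; the critical values $\theta_0,\theta_0'$ in the statement will arise respectively from the $x=1$ branch (via a discriminant) and from the $x\ne 1$ branch (via the limiting formula~\eqref{12i}).

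\textbf{Branch $x=1$.} Under $x=1$ the remaining equation is the cubic~\eqref{y3} in $y$. For $\theta\in(0,1)$, $\theta^{2^p}\to 0$, so its limit is $\theta y^3-y^2+y-2\theta=0$, whose coefficients show three sign changes; by Descartes' rule it has either $3$ or $1$ positive roots. To distinguish the two cases I would use the discriminant $\Delta_0(\theta)$ displayed in the text. The graph in Figure~\ref{y1-y3-p}~(right) indicates a unique zero $\theta_0\in(0,1)$ of $\Delta_0$, with $\Delta_0>0$ on $(0,\theta_0)$ and $\Delta_0<0$ on $(\theta_0,1)$; this would need to be made rigorous by explicit root-isolation in the numerator of $\Delta_0$. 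The count is then $3$ on $(0,\theta_0)$, $2$ (one double, one simple) at $\theta=\theta_0$, and $1$ on $(\theta_0,1)$. For $\theta\ge 1$ I would analyse~\eqref{y3} directly: the coefficient $(\theta^{2^p}+1)$ diverges but the unique positive root is of order $y\sim 2\theta/(\theta^{2^p}+1)\to 0$, so the branch still contributes exactly one solution.

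\textbf{Branch $x\ne 1$.} From~\eqref{y} combined with~\eqref{rs3.2b}, after the symmetric substitution $\xi=x+1/x$ one recovers the quadratic whose roots are $\xi_{1,2}(\theta,p)$ in~\eqref{12}. I would take $p\to\infty$ termwise (using $q=1-\theta^{2^p}\to 1$ for $\theta<1$) to recover $\xi_{1,2}(\theta)$ as in~\eqref{12i}; reality requires $(2\theta-1)(4\theta^2+2\theta-1)\ge 0$, and a direct computation, confirmed by Figure~\ref{x12}, shows that the admissibility condition~\eqref{C1}, namely $2<\xi_1\le\xi_2$, holds precisely on $(0,\theta_0']$ with $\theta_0'=(\sqrt{5}-1)/4$, the two roots coinciding at $\theta=\theta_0'$ with common value $2(\sqrt{5}+1)>2$. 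Each admissible $\xi>2$ produces exactly two positive $x$ via $x+1/x=\xi$, and~\eqref{y} then determines $y=\sqrt{x/\theta}$; this branch therefore contributes $4$ solutions on $(0,\theta_0')$, $2$ at $\theta=\theta_0'$, and $0$ for $\theta>\theta_0'$.

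Adding the two branches gives the announced transition $1\to 3\to 5\to 6\to 7$ across the thresholds $\theta_0'$ and $\theta_0$. The main obstacle I expect is the rigorous justification of the limit $p\to\infty$ in~\eqref{12}: although termwise convergence is straightforward, one must verify that no admissible solution is gained or lost in the limit uniformly near the critical thresholds $\theta_0$ and $\theta_0'$, and that the simultaneous limits of the four explicit formulas $x_i(\theta,p)$ remain pairwise distinct throughout $(0,\theta_0')$. A secondary difficulty is the algebraic certification of the unique root $\theta_0$ of $\Delta_0$ in $(0,1)$, which is required to make the exact count at $\theta=\theta_0$ fully rigorous rather than graphical.
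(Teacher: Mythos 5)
Your plan follows essentially the same route as the paper: the dichotomy $x=1$ versus $x\ne 1$ from~\eqref{rs3.2a} with Remark~\ref{<1} excluding the second branch for $\theta\ge 1$, the limiting cubic~\eqref{y3} with the discriminant $\Delta_0$ producing $\theta_0$, and the termwise limit~\eqref{12i} of $\xi_{1,2}(\theta,p)$ together with Condition~\eqref{C1} producing $\theta_0'=(\sqrt{5}-1)/4$, yielding the identical counts $1\to3\to5\to6\to7$. Your added details (the common value $2(\sqrt{5}+1)$ at $\theta=\theta_0'$ and the asymptotics $y\sim 2\theta/(\theta^{2^p}+1)$ for $\theta>1$) are correct, and the caveats you flag about root isolation for $\Delta_0$ and uniformity of the limit are precisely the points the paper itself settles only by computer analysis and the plots in Figures~\ref{y1-y3-p} and~\ref{x12}.
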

\section{Conditions for non-extremality of translation-invariant SGMs}

It is known that a translation-invariant SGM corresponding to a vector $v=(x,y)\in \mathbb{R}^2$ (which is a solution to~\eqref{si}) is a tree-indexed Markov chain with states $\{0,1,2\}$, see \cite[Definition 12.2]{Ge}, and for the transition matrix
\begin{equation}\label{m} {\mathbb P}=\left(\begin{array}{ccc}
{x^k\over x^k+\theta y^k}&{\theta y^k\over x^k+\theta y^k}&0\\[3mm]
{\theta x^k\over \theta x^k+y^k+\theta}&{y^k\over \theta x^k+y^k+\theta}&{\theta\over \theta x^k+y^k+\theta}\\[3mm]
0&{\theta y^k\over \theta y^k+1}&{1\over \theta y^k+1}
\end{array}
\right).
\end{equation}
Hence, for each given solution $(x_i,y_i)$, $i=1,\dots,7$ of~\eqref{si}, we need to calculate the eigenvalues of $\mathbb P$. The first eigenvalue is one since we deal with a stochastic matrix, the other two eigenvalues
 \begin{equation}\label{ev}\lambda_j(x_i, y_i, \theta, k), \qquad j=1,2,
\end{equation} 	
can be found via symbolic computer analysis, but they have bulky formulas. 
For example, in the case $x=1$, for each $y$ 
the matrix~\eqref{m} has three eigenvalues, 1 and
$$\lambda_1(1, y,\theta,k)={(1-2\theta^2)y^k\over \theta y^{2k}+(2\theta^2+1)y^k+2\theta}\qquad\text{and}\qquad
\lambda_2(1, y,\theta, k)=
{1\over \theta y^k+1}.
$$
However, we can still deduce the following relation.
\begin{lemma}\label{kam} If $\theta\in (0,1)$, then, for any solution $y$ of~\eqref{yk}, we have that
$$|\lambda_1(1,y,\theta,k)|\leq  \lambda_2(1,y,\theta,k).$$
\end{lemma}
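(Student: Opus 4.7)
The plan is to reduce the inequality to a one-line algebraic check by explicitly simplifying the ratio $\lambda_1/\lambda_2$. The central observation, which I would make without any appeal to~\eqref{yk}, is that the denominator of $\lambda_1$ factors as
\begin{equation*}
\theta y^{2k}+(2\theta^2+1)y^k+2\theta=(\theta y^k+1)(y^k+2\theta),
\end{equation*}
since $y^k(\theta y^k+1)+2\theta(\theta y^k+1)$ expands exactly to the left-hand side. Combining this with the formula $\lambda_2(1,y,\theta,k)=1/(\theta y^k+1)$ would yield the clean identity
\begin{equation*}
\frac{\lambda_1(1,y,\theta,k)}{\lambda_2(1,y,\theta,k)}=\frac{(1-2\theta^2)\,y^k}{y^k+2\theta}.
\end{equation*}

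Because $\lambda_2>0$ for $y,\theta>0$, the inequality to be proved is equivalent to $|1-2\theta^2|\,y^k\le y^k+2\theta$. I would then dispatch this by splitting on the sign of $1-2\theta^2$. For $\theta\in(0,1/\sqrt{2}\,]$, the left side equals $(1-2\theta^2)y^k\le y^k<y^k+2\theta$, which is immediate. For $\theta\in(1/\sqrt{2},1)$, the left side equals $(2\theta^2-1)y^k$, and after rearrangement the bound becomes $2(\theta^2-1)y^k\le 2\theta$; this holds because $\theta^2<1$ makes the left side non-positive, while the right side is strictly positive.

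I do not anticipate a genuine obstacle here: once the factorization of the denominator is in hand, the argument is a two-case elementary estimate. What I find noteworthy is that neither the fixed-point equation~\eqref{yk} nor any constraint beyond $\theta\in(0,1)$ and $y>0$ actually enters the proof, so the lemma as stated could in fact be strengthened to hold for arbitrary positive $y$; restricting to solutions of~\eqref{yk} is natural only in the context of the subsequent extremality analysis.
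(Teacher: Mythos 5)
Your proof is correct, and it takes a slightly different---and arguably cleaner---route than the paper's. Your key step, the factorization $\theta y^{2k}+(2\theta^2+1)y^k+2\theta=(\theta y^k+1)(y^k+2\theta)$, checks out, and it collapses the claim to $|1-2\theta^2|\,y^k\le y^k+2\theta$, which your two cases settle correctly: for $\theta^2\le 1/2$ the bound is immediate, and for $\theta^2>1/2$ it rearranges to $2(\theta^2-1)y^k\le 2\theta$, true since $\theta<1$. The paper instead proves the two one-sided bounds $-\lambda_2\le\lambda_1\le\lambda_2$ separately by clearing denominators: the upper bound reduces to the universally true $(\theta y^k+1)^2\ge 0$, while the lower bound, after the same case split on the sign of $1-2\theta^2$ that you make, reduces to $\theta(1-\theta^2)y^{2k}+y^k+\theta\ge 0$, which holds for all $\theta<1$. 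Substantively the two arguments are the same elementary computation, but your factorization buys the explicit identity $\lambda_1/\lambda_2=(1-2\theta^2)y^k/(y^k+2\theta)$, which makes the mechanism transparent in a single stroke and shows the inequality is in fact strict. Your closing observation is also accurate and applies equally to the paper's own proof: neither argument ever invokes equation~\eqref{yk}, only $y>0$ and $\theta\in(0,1)$, so the lemma indeed holds for arbitrary positive $y$, with the restriction to solutions of~\eqref{yk} mattering only for the extremality application that follows.
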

\begin{proof} Since $\lambda_2>0$, we have to show that 
\begin{equation}\label{ten}
-\lambda_2(1,y,\theta,k)\leq\lambda_1(1,y,\theta,k)\leq  \lambda_2(1,y,\theta,k).
\end{equation}
It is easy to see that the inequality on the left is true for $\theta$ satisfying $1-2\theta^2\geq 0$. If $1-2\theta^2< 0$ then the inequality on the left is equivalent to 
$$\theta(1-\theta^2)y^{2k}+y^k+\theta\geq 0,$$ 
which is true for all $\theta<1$.
Next, the inequality on the right of~\eqref{ten} is equivalent to the inequality 
$$(\theta y^k+1)^2\geq 0,$$
which is universally true, concluding the proof.
\end{proof}

Now, a sufficient condition for non-extremality of a Gibbs measure $\mu$ corresponding
to ${\mathbb P}$ on a Cayley tree of order $k\geq 1$ is given by the Kesten--Stigum Condition
$k\lambda^2>1$, where $\lambda$ is the second-largest (in absolute value) eigenvalue of ${\mathbb P}$, see~\cite{Ke}.
Hence, denoting for $i=1,\dots,7$, 
$$\eta_i(\theta,k)=k\lambda^2_2(x_i,y_i,\theta,k)-1\qquad\text{ and }\qquad\mathbb K_i=\{(\theta, k)\in (0,1)\times \mathbb N\colon \eta_i(\theta,k)>0\},$$
using Lemma~\ref{kam}, we have the following criterion.
\begin{pro}\label{tne}
Let $\mu_i$ denote the translation-invariant SGM associated to the tuple  $(x_i,y_i,\theta,k)$. If $(\theta, k)\in \mathbb K_i$ then $\mu_i$ is non-extremal.
\end{pro}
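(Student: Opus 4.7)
The plan is to recognize Proposition~\ref{tne} as essentially a direct application of the Kesten--Stigum sufficient condition for non-extremality, with Lemma~\ref{kam} supplying the identification of the dominant non-trivial eigenvalue of $\mathbb{P}$. Recall that the Kesten--Stigum criterion says that if $\lambda$ is the second-largest (in absolute value) eigenvalue of the transition matrix of the tree-indexed Markov chain, then $k\lambda^2>1$ suffices for non-extremality. So the work reduces to identifying $\lambda$ in terms of $\lambda_1,\lambda_2$ from~\eqref{ev} and then translating $k\lambda^2>1$ into the condition $\eta_i(\theta,k)>0$ that defines $\mathbb{K}_i$.

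Concretely, the steps I would carry out are as follows. First, observe that $\mathbb{P}$ in~\eqref{m} is a stochastic matrix, so $1$ is its Perron--Frobenius eigenvalue; the other two eigenvalues are $\lambda_1(x_i,y_i,\theta,k)$ and $\lambda_2(x_i,y_i,\theta,k)$. Second, invoke Lemma~\ref{kam} to conclude $|\lambda_1|\leq \lambda_2$, so that $\lambda_2$ is precisely the second-largest (in absolute value) eigenvalue of $\mathbb{P}$, and in particular $|\lambda_2|<1$ (since $\mathbb{P}$ is irreducible aperiodic at the $x=1$ fixed points). Third, by definition of $\mathbb{K}_i$, the condition $(\theta,k)\in \mathbb{K}_i$ is exactly the inequality $k\lambda_2^2(x_i,y_i,\theta,k)>1$. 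Fourth, apply the Kesten--Stigum theorem~\cite{Ke} to the tree-indexed Markov chain representation of $\mu_i$ described by~\eqref{m} to conclude non-extremality.

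The substantive obstacle, and the only one that is not purely bookkeeping, is the second step: Lemma~\ref{kam} as stated only covers the $x=1$ branch (\emph{i.e.}\ solutions $y$ of~\eqref{yk}). For the four solutions coming from the $x\neq 1$ branch, to justify that $\lambda_2$ is again the dominant non-trivial eigenvalue one needs a separate comparison $|\lambda_1(x_i,y_i,\theta,k)|\leq \lambda_2(x_i,y_i,\theta,k)$ on the relevant parameter range. I would treat this by proving the analogue of~\eqref{ten} directly from the formulas for $\lambda_j$ derived from~\eqref{m}: writing the characteristic polynomial of the $2\times 2$ sub-block obtained after factoring out the Perron eigenvalue, the desired inequality becomes a polynomial inequality in $x,y,\theta$ restricted to the fixed-point locus~\eqref{si}, which can be checked by the same elementary algebraic manipulations used in the proof of Lemma~\ref{kam} (separating the cases $1-2\theta^2\geq 0$ and $1-2\theta^2<0$ and using $\theta\in(0,1)$).

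Once the dominance of $\lambda_2$ is secured for all seven branches, the proposition follows immediately: $(\theta,k)\in\mathbb{K}_i$ unpacks to $k\lambda_2^2>1$, Kesten--Stigum gives non-extremality of $\mu_i$, and no further computation with the explicit (and bulky) formulas for the eigenvalues is required.
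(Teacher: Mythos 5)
Your plan is, at its core, exactly the paper's argument: $\mu_i$ is a tree-indexed Markov chain with transition matrix $\mathbb P$ from~\eqref{m}, the Kesten--Stigum condition $k\lambda^2>1$ (with $\lambda$ the second-largest eigenvalue in absolute value) is sufficient for non-extremality by~\cite{Ke}, and $(\theta,k)\in\mathbb K_i$ unpacks precisely to $k\lambda_2^2(x_i,y_i,\theta,k)>1$; the paper offers nothing beyond this one-step application.

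Where you depart from the paper is in declaring the extension of Lemma~\ref{kam} to the $x\neq1$ branches the ``substantive obstacle'' of the proof. It is not needed, for any branch. The Kesten--Stigum criterion is monotone in the eigenvalue bound: all eigenvalues of the stochastic matrix $\mathbb P$ have modulus at most $1$, and the second-largest in absolute value is $\lambda=\max\{|\lambda_1|,|\lambda_2|\}\geq |\lambda_2|$, so $k\lambda_2^2>1$ already forces $k\lambda^2>1$ regardless of whether $\lambda_2$ dominates $\lambda_1$. Dominance of $\lambda_2$ only matters in the converse direction, namely to assert that the Kesten--Stigum condition \emph{fails} when $k\lambda_2^2\leq 1$ --- which is how the paper actually exploits Lemma~\ref{kam}, e.g.\ to conclude that $\mu_1$ on the $x=1$ branch never satisfies the condition. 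So your proposal is correct and proves Proposition~\ref{tne}, but the additional polynomial inequality you propose to verify on the $x\neq1$ fixed-point locus of~\eqref{si} is superfluous for the stated sufficiency claim; dropping it recovers the paper's (immediate) proof verbatim.
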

In order to employ the proposition, 
for $k=2$ and $k=3$, we find representations for $\mathbb K_i$. 
In case $k=2$ and $x_i=1$, we have for $i=1,2,3$ that $\lambda_2(1,y_i,\theta,2)=1/(\theta y_i^2+1)$ and thus 
$$\eta_i(\theta,2)={2\over (\theta y_i^2+1)^2}-1.$$
\begin{figure}[h]
	\begin{center}
\includegraphics[width=7.5cm]{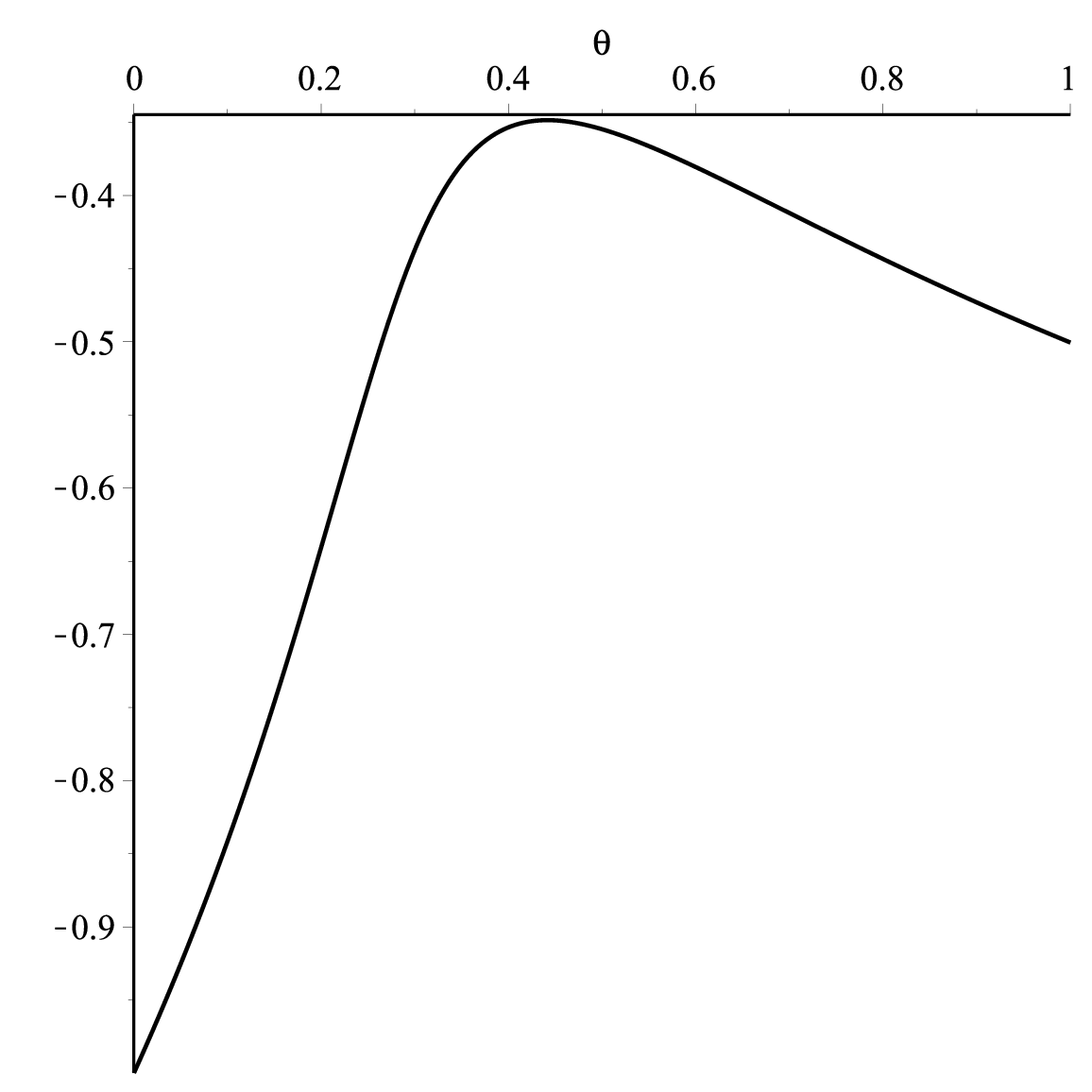} \ \ \ \ \
	\includegraphics[width=7.5cm]{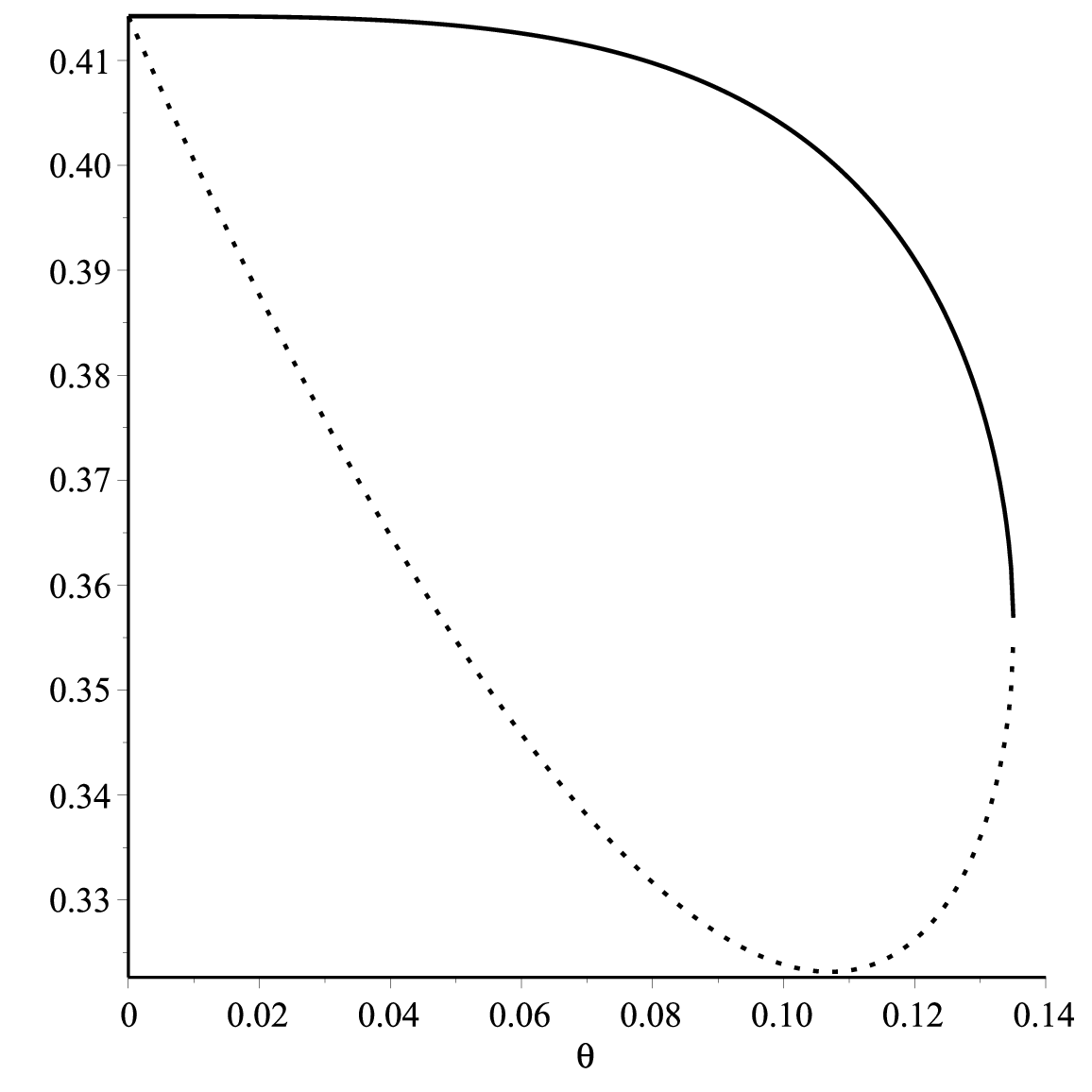} \ \ \
	\end{center}
	\caption{Left: The graph of the function $\eta_1(\theta,2)$, $\theta\in (0,1)$. Right: The graphs of the functions $\eta_2(\theta,2)$ (solid line) and  $\eta_3(\theta,2)$ (doted line) when $\theta\in(0, 0.14)$.}\label{mu123}
\end{figure}
Hence, from Figure~\ref{mu123} it follows that, for $\mu_1$,  the Kesten--Stigum condition is never satisfied, but for $\mu_2$ and $\mu_3$ the condition is always satisfied, i.e., $\mu_2$ and $\mu_3$ are not-extreme.

In case $k=3$ and $x_i=1$, we have for $i=1,2,3$, using $\lambda_2(1,y_i,\theta,3)=1/(\theta y_i^3+1)$, that
$$\eta_i(\theta,3)={3\over (\theta y_i^3+1)^2}-1.$$
But, as shown in Section~\ref{sec_13}, if $\theta<\theta_c\approx 0.206$, there exist two solutions $y_2,y_3<1$. For these solutions we have 
$$\eta_i(\theta,3)={3\over (\theta y_i^3+1)^2}-1> {3\over (\theta +1)^2}-1.$$
But, since $\theta<\theta_c$, we have that
$3/(\theta +1)^2-1>0$ and we can formulate the following summarizing result.
\begin{pro} For $\theta<1$,  $k=2$ and $k=3$ the translation-invariant SGMs corresponding to solutions of the form $(1,y)$ with $y<1$ are not-extreme.
\end{pro}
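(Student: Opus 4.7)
The plan is to apply Proposition~\ref{tne} directly to the tuples $(x_i,y_i,\theta,k)=(1,y_i,\theta,k)$ with $y_i<1$; the proposition then reduces the non-extremality assertion to verifying the strict inequality $\eta_i(\theta,k)>0$ throughout the parameter range in which such solutions actually exist.

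In the case $x=1$, the transition matrix~\eqref{m} has second eigenvalue $\lambda_2(1,y,\theta,k)=1/(\theta y^k+1)$, and Lemma~\ref{kam} identifies it as the largest in absolute value apart from the trivial eigenvalue $1$. Consequently, the Kesten--Stigum condition becomes $\eta_i(\theta,k)=k/(\theta y_i^k+1)^2-1>0$. Using only the assumption $y_i<1$, I would then derive the coarse estimate
\begin{equation*}
\eta_i(\theta,k)=\frac{k}{(\theta y_i^k+1)^2}-1>\frac{k}{(\theta+1)^2}-1,
\end{equation*}
which is positive precisely when $\theta<\sqrt{k}-1$.

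It remains to confirm that, for both $k=2$ and $k=3$, the $\theta$-range in which a solution $y<1$ of~\eqref{yk} exists is contained in $(0,\sqrt{k}-1)$. For $k=3$ this has already been done in Section~\ref{sec_13}: such solutions appear only for $\theta<\theta_c\approx 0.206$, and $\theta_c<\sqrt{3}-1\approx 0.732$. For $k=2$, the corresponding equation for $y$ coincides with the $p\to\infty$ limit of~\eqref{y3} analyzed in Section~\ref{sec_4}, where the solutions $y<1$ are shown to exist only for $\theta<\theta_0\approx 0.135<\sqrt{2}-1\approx 0.414$. With both inclusions in place, Proposition~\ref{tne} yields the conclusion.

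The main obstacle is not technical but conceptual: the entire argument rests on the rather crude inequality $y_i^k<1$ being strong enough to push $k/(\theta+1)^2$ above $1$. This succeeds only because the existence thresholds $\theta_0$ and $\theta_c$ happen to lie well inside $(0,\sqrt{k}-1)$ for the cases at hand; were one to push $k$ higher, or to loosen the requirement $\theta<1$, one would need a quantitative lower bound on $1-y_i$, which in turn would force a finer analysis of the cubic~\eqref{yk} rather than the one-line bound used here.
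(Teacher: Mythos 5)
Your proof is correct. For $k=3$ it reproduces the paper's argument exactly: the bound $\eta_i(\theta,3)>3/(\theta+1)^2-1$ together with $\theta_c\approx 0.206<\sqrt{3}-1$ is precisely what the paper writes. For $k=2$, however, you genuinely depart from the paper: there the authors do not argue analytically at all, but simply plot $\eta_2(\theta,2)$ and $\eta_3(\theta,2)$ on $(0,0.14)$ (Figure~\ref{mu123}) and read off that the Kesten--Stigum condition is always satisfied. You instead extend the coarse estimate $\eta_i(\theta,k)>k/(\theta+1)^2-1>0$ for $\theta<\sqrt{k}-1$ to both values of $k$, combined with the observation that the equation \eqref{yk} for $k=2$ is the $p\to\infty$ limit of \eqref{y3}, so that solutions $y<1$ only occur for $\theta$ up to $\theta_0\approx 0.135<\sqrt{2}-1$. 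This buys a unified, figure-free verification of the Kesten--Stigum inequality and makes explicit the structural reason it holds (the existence thresholds sit well inside $(0,\sqrt{k}-1)$), a point the paper leaves implicit; what it does not remove is the reliance on the existence thresholds themselves, since $\theta_0$ is located in the paper only via the discriminant plot in Section~\ref{sec_4} and $\theta_c$ via the analysis of Section~\ref{sec_13}, so your argument inherits those inputs rather than re-deriving them. One negligible boundary remark: at $\theta=\theta_0$ the two roots below $1$ merge into a double root which still satisfies $y<1$, so ``exist only for $\theta<\theta_0$'' should read $\theta\le\theta_0$; this is immaterial for your conclusion since the bound applies for all $\theta<\sqrt{2}-1$.
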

Let us note that for $k=3$ and $x_i=1$ the translation-invariant SGM corresponding to the solution $y_1>\sqrt[4]{2}$ does not satisfy the Kesten--Stigum condition if $\theta>(\sqrt{3}-1)/\sqrt[4]{8}\approx 0.435.$ Indeed, using $y_1>\sqrt[4]{2}$ we get 
$$\eta_i(\theta,3)={3\over (\theta y_1^3+1)^2}-1< {3\over (\theta \sqrt[4]{8}+1)^2}-1<0 \ \ \Leftrightarrow \ \ \theta>{\sqrt{3}-1\over \sqrt[4]{8}}.$$

\begin{rk} Let us finally discuss further extremality conditions for translation-invariant SGMs. Various approaches in the literature aim to establish sufficient conditions for extremality, which can be simplified to a finite-dimensional optimization problem based solely on the transition matrix. For instance, the percolation method proposed in~\cite{MSW} and \cite{Mos2}, the symmetric-entropy method by~\cite{FK}, or the bound provided in~\cite{Mar} for the Ising model in the presence of an external field. Different techniques are employed also in~\cite{BC} in order to demonstrate the sharpness of the Kesten-Stigum bound for an Ising channel with minimal asymmetry.

 However, since, in our case, the transition matrix corresponding to a translation-invariant SGM depends on the solutions $(x_i, y_i)$, which have a very complex form, it appears challenging to apply the aforementioned methods to verify extremality. Furthermore, the difficulty increases when we only have knowledge of the existence of a solution but lack its explicit form. Nonetheless, our results could serve as a basis for numerical investigations of extremality in the future.
\end{rk}

 \section*{Acknowledgements}

 B.~Jahnel is supported by the Leibniz Association within the Leibniz Junior Research Group on {\em Probabilistic Methods for Dynamic Communication Networks} as part of the Leibniz Competition (grant no.~J105/2020).
 U.~Rozikov thanks the Weierstrass Institute for Applied Analysis and Stochastics, Berlin, Germany for support of his visit.  His  work was partially supported through a grant from the IMU--CDC and the fundamental project (grant no.~F--FA--2021--425) of The Ministry of Innovative Development of the Republic of Uzbekistan.

\end{document}